\newtheorem*{conjecture*}{Conjecture}
\newtheorem{lemma}{Lemma}
\newtheorem{remark}{Remark}
\newtheorem{definition}{Definition}
\newtheorem{theorem}{Theorem}
\newcommand{\hE}{\hat{E}}
\begin{document}

\title{On the Efficacy of the Peeling Decoder for the Quantum Expander Code}
 \author{%
 \IEEEauthorblockN{Jefrin Sharmitha Prabhu$^{\dagger\star}$, Abhinav Vaishya$^{\dagger\star}$, Shobhit Bhatnagar$^\dagger$, 	Aryaman Manish Kolhe$^\wedge$,\\
 V. Lalitha$^\#$ and P. Vijay Kumar$^\dagger$}\\
\IEEEauthorblockA{
$^\dagger$Department of Electrical Communication Engineering, IISc Bangalore\\
$^\wedge$Centre for Quantum Science and Technology, IIIT Hyderabad\\
$^\#$Signal Processing and Communication Research Center, IIIT Hyderabad\\
\{jefrinsharmithaprabhu,vaishyaabhinav,shobhitb97,aryaman.mk613,pvk1729\}@gmail.com, lalitha.v@iiit.ac.in\\
\thanks{$^\star$ indicates equal contribution.} 
}}
\maketitle

\begin{abstract}
The problem of recovering from qubit erasures has recently gained attention as erasures occur in many physical systems such as photonic systems, trapped ions, superconducting qubits and circuit quantum electrodynamics.  While several linear-time decoders for error correction are known, their error-correcting capability is limited to half the minimum distance of the code, whereas erasure correction allows one to go beyond this limit.  As in the classical case, stopping sets pose a major challenge in designing efficient erasure decoders for quantum LDPC codes.  In this paper, we show through simulation, that an attractive alternative here, is the use of quantum expander codes in conjunction with the peeling decoder that has linear complexity.  We also discuss additional techniques including small-set-flip decoding, that can be applied following the peeling operation, to improve decoding performance and their associated complexity. 

\end{abstract}

\section{Introduction}
Quantum information processing promises an exponential speed-up for many classical tasks. However, physical qubits are far more susceptible to noise than classical bits. Quantum error correction is imperative for building a practical quantum computer, and has been an active field of research for the last two decades.

A popular class of quantum error-correcting codes is the class of stabilizer codes developed by Gottesman \cite{gottesman1997stabilizer}, where the codespace is defined as the simultaneous $+1$-eigenspace of an abelian group of Pauli operators, called the stabilizer group. A practical constraint on the stabilizer operators is that they must have low weight, i.e., they must operate on only a few qubits at a time. Further, each qubit must be operated upon by only a few stabilizer operators. This gives rise to the notion of quantum Low Density Parity Check (LDPC) codes, analogous to the classical case.
Gottesman \cite{gottesman2013fault} showed that it is possible to build fault-tolerant quantum circuits using such quantum LDPC codes.
Efforts to construct good quantum LDPC codes that have both high rate and large minimum distances gave rise to many interesting and innovative ideas \cite{hastings2021fiber,breuckmann2021balanced,panteleev2021quantum,panteleev2022asymptotically,leverrier2022quantum,leverrier2023decoding,dinur2023good,gu2023efficient}. One such idea is the hypergraph product (HGP) construction by Tillich and Zémor \cite{tillich2013quantum}. The HGP construction takes two classical LDPC codes as inputs and yields a quantum LDPC code. A special case is when the classical codes are expander codes \cite{sipser1996expander}. In this case the resultant quantum code is called a quantum expander code \cite{leverrier2015quantum}.

While quantum expander codes do not simultaneously achieve a constant rate and linear minimum distance (unlike \cite{panteleev2022asymptotically,leverrier2022quantum,dinur2023good}), they are relevant to practice as they permit linear-time decoding \cite{leverrier2015quantum,fawzi2018efficient,fawzi2020constant}.

However, in addition to having good rates and distances, practical quantum error-correcting codes also need to have low decoding complexity. Recently, the problem of recovering from qubit erasures has gained much interest as they occur in many physical systems such as photonic systems \cite{knill2001scheme},\cite{bartolucci2023fusion} (where photon loss, which can be modeled as an erasure, is the dominant source of error) and systems based on neutral atoms \cite{wu2022erasure}, trapped ions \cite{kang2023quantum}, superconducting qubits \cite{kubica2023erasure} and circuit quantum electrodyanamics \cite{tsunoda2023error}.
Moreover, for the class of HGP codes, it has been shown using a classical Viderman-like approach \cite{viderman2013linear} that the problem of error correction can be converted to that of erasure correction for a certain error regime \cite{krishna2024viderman}. Several linear-time decoders for error-correction are known \cite{dinur2023good,gu2023efficient,leverrier2023decoding}, however their error-correcting capability is limited to half of the minimum distance of the code. Erasure correction allows one to go beyond this limit.
As in the classical case, stopping sets pose a major challenge in designing efficient erasure decoders for quantum LDPC codes.
Several decoders for erasure-correcting quantum LDPC codes have been proposed in the literature \cite{delfosse2020linear,delfosse2021almost,delfosse2022toward,lee2020trimming,solanki2023decoding,solanki2021correcting,yao2024cluster,kuo2024degenerate,gokduman2024erasure,freire2025optimizing} that deal with stopping sets in different ways. 

\emph{Our contributions:} The principal contribution of the paper is showing through simulation results that an attractive option for dealing with erasures, is the use of quantum expander codes in conjunction with the linear-complexity peeling decoder. We discuss additional techniques such as cluster-based decoding and small-set-flip decoding, that are capable of improving decoding performance by handling specific classes of stopping sets.  We also discuss their associated complexity. Our simulation results also include limited statistics on erasure patterns that the peeling decoder was unable to recover from. 


\emph{Organization of the paper:} Section~\ref{sec:background} provides background on quantum error correction. The HGP code, along with the relevant prior literature on decoding it, is presented in Section~\ref{sec:hgp}. Our peeling-based decoding algorithm is presented in Section~\ref{sec:main_result}. Analysis of our algorithm is presented in Section~\ref{sec:main_analysis}. Simulation results are presented in Section~\ref{sec:simulation} and the final section, Section~\ref{sec:conclusions}, draws conclusions. 

\section{Quantum Error Correction} \label{sec:background}

We will use $\mathbf{0}_n$ to denote the zero vector of length $n$, and $I_n$ to denote the $(n\times n)$ identity matrix.

\subsection{Classical Error-Correcting Codes}
\label{subsec:ecc} 
An \([n, k, d_{\mathrm{min}}]\) binary, linear, classical error-correcting code \(\mathcal{C}\) is a \(k\)-dimensional subspace of \(\mathbb{F}_2^n\), that corresponds to the nullspace of a suitably-defined matrix $H$ called the parity-check matrix of the code. 
 
The code can also be defined using a bipartite graph $\mathcal{T}(V \cup C, E)$, alternatively denoted as $\mathcal{T}(H)$, called the Tanner graph. The left vertices $V$ (variable nodes) and the right vertices $C$ (check nodes) correspond to the columns and rows of the parity-check matrix $H$ of size $|C| \times |V|$. An edge exists between $i \in C$ and $j \in V$ if $h_{ij} = 1$; otherwise, $h_{ij} = 0$.


A code is $(d_V, d_C)$-regular LDPC if its associated Tanner graph is biregular, with variable nodes $V$ having degree $d_V$ and check nodes $C$ having degree $d_C$. The particular class of biregular, bipartite graphs that we will consider here is the class of expander graphs \cite{sipser1996expander} defined below:
\begin{definition} \label{def:expander} (Expander graph)
    Let $\mathcal{T}(V \cup C, E)$ be a bipartite Tanner graph such that $|V| \geq |C|$. The graph $\mathcal{T}$ is said to be $(\gamma_V, \delta_V)$-left-expanding (resp. $(\gamma_C, \delta_C)$-right-expanding) for some constants $\gamma_V, \delta_V > 0$ (resp. $\gamma_C, \delta_C > 0$), if for any subset $S \subseteq V$ (resp. $\mathcal{T} \subseteq C$), such that $|S| \leq \gamma_V |V|$ (resp.$|T| \leq \gamma_C |C|$), the neighbourhood $\Gamma(S)$ of $S$ (resp. $\Gamma(T)$ of $\mathcal{T}$) in the graph $\mathcal{T}(V \cup C, E)$ satisfies $|\Gamma(S)| \geq (1-\delta_V) d_V |S|$ (resp. $|\Gamma(T)| \geq (1-\delta_C) d_C |T|$). The graph $\mathcal{T}(V \cup C, E)$ is said to be $(\gamma_V, \delta_V, \gamma_C, \delta_C)$-left-right-expanding if it is both $(\gamma_V, \delta_V)$-left-expanding and $(\gamma_C, \delta_C)$-right-expanding.
\end{definition}
The expander graphs considered in this paper are left-right-expanding.  Classical expander codes, introduced by Sipser and Spielman \cite{sipser1996expander}, are defined as codes associated with $(\gamma,\delta)$-left-expanding $(d_V,d_C)$-regular Tanner graphs $\mathcal{T}(V \cup C,E)$. The minimum distance $d$ of the expander code associated with Tanner graph $\mathcal{T}(V \cup C,E)$ satisfies the lower bound $d > \gamma |V|$. Viderman proposed in \cite{viderman2013linear}, a linear-time erasure decoding algorithm capable of correcting up to $\gamma |V|$ erasures.

Even though explicit construction of good expander graphs is challenging, such graphs can nevertheless be efficiently found using probabilistic techniques \cite{richardson2001efficient},\cite{fawzi2018efficient} as outlined in the theorem below.

\begin{theorem} (\cite[Theorem 2.3]{fawzi2018efficient}) 
Let $\delta_V,\delta_C > 0$ be two constants. For integers $d_V> \frac{1}{\delta_V}$ and $d_C> \frac{1}{\delta_C}$, a graph $\mathcal{T}(V \cup C, E)$ with left-degree bounded by $d_V$ and right-degree bounded by $d_C$ chosen at random according to some distribution is $(\gamma_V,\delta_V, \gamma_C, \delta_C)$-left-right expanding for $\gamma_V,\gamma_C=\Omega(1)$ with high probability.
\end{theorem}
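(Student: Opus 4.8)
The plan is to prove this by the first-moment (union-bound) method, in the standard way such statements are established (and essentially as in the cited reference): fix a convenient random model and bound the probability that \emph{any} subset witnesses a failure of expansion. Since the statement only asks for \emph{some} distribution, I would work with the configuration (pairing) model. Take $|V|$ left vertices each carrying $d_V$ half-edges and $|C|$ right vertices each carrying $d_C$ half-edges, with $|V|$ and $|C|$ chosen so that $d_V|V|=d_C|C|=:m$, and let $\mathcal{T}$ be the graph induced by a uniformly random perfect matching on the two sets of half-edges. This produces a multigraph with left-degree $d_V$ and right-degree $d_C$; to land in the ``degree bounded by $d_V$ (resp.\ $d_C$)'' setting one may either condition on the constant-probability event that the matching has no repeated or looping pairs, or simply merge parallel edges, which only decreases $|E|$ while leaving every neighbourhood set $\Gamma(S)$ unchanged, so the expansion inequalities are inherited.

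For left-expansion, fix $S\subseteq V$ with $|S|=s\le\gamma_V|V|$ and put $t:=\lfloor(1-\delta_V)d_V s\rfloor$. If $|\Gamma(S)|<(1-\delta_V)d_V s$ then there is a $T\subseteq C$ with $|T|=t$ into whose $d_C t$ half-edges all $d_V s$ half-edges incident to $S$ are matched. Revealing the matching one $S$-incident half-edge at a time, each conditional probability of landing in $T$ is at most $d_C t/(m-d_V s)\le t/\big((1-\gamma_V)|C|\big)$, using $d_V s\le\gamma_V m$ and $m=d_C|C|$; hence $\Pr[\Gamma(S)\subseteq T]\le\big(c_0 t/|C|\big)^{d_V s}$ with $c_0:=(1-\gamma_V)^{-1}$. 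Union-bounding over the choices of $S$ and $T$ and using $\binom{n}{k}\le(en/k)^{k}$,
\[
\Pr[\text{left-expansion fails at size }s]\ \le\ \Big(\frac{e|V|}{s}\Big)^{s}\Big(\frac{e|C|}{t}\Big)^{t}\Big(\frac{c_0 t}{|C|}\Big)^{d_V s}.
\]
Substituting $t\asymp(1-\delta_V)d_V s$ and $|C|=(d_V/d_C)|V|$ and collecting powers of $s/|V|$, the right-hand side reduces to the form $\big(c_1\,(s/|V|)^{\,\delta_V d_V-1}\big)^{s}$ for a constant $c_1$ depending only on $d_V,d_C,\delta_V$.

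This is where the hypothesis $d_V>1/\delta_V$, i.e.\ $\delta_V d_V-1>0$, enters: the base $c_1(s/|V|)^{\delta_V d_V-1}$ is increasing in $s$, so over $1\le s\le\gamma_V|V|$ it is largest at $s=\gamma_V|V|$, where it equals $c_1\gamma_V^{\,\delta_V d_V-1}$. Choosing the constant $\gamma_V$ small enough that $c_1\gamma_V^{\,\delta_V d_V-1}\le\tfrac12$ forces $\Pr[\text{left-expansion fails at size }s]\le2^{-s}$ for all $s$ in range, so summing over $s$ the total is $o(1)$ (dominated by the $s=1$ term $c_1|V|^{-(\delta_V d_V-1)}$). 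I would then run the identical argument with the roles of $(V,d_V,\delta_V)$ and $(C,d_C,\delta_C)$ exchanged to get right-expansion with some constant $\gamma_C$, and conclude with a union bound over the two failure events; both $\gamma_V$ and $\gamma_C$ are constants, i.e.\ $\Omega(1)$, as required.

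The step I expect to be most delicate is the edge-collision estimate under the dependencies of the pairing model: the ``sampling without replacement'' conditional probabilities drift as half-edges are consumed, and one must verify that capping $s$ by a small constant times $|V|$ keeps this drift absorbed into the single constant $c_0$ uniformly over the whole range. The remaining bookkeeping — checking that one fixed $\gamma$ works simultaneously for all subset sizes, in particular in the regime $s=\Theta(|V|)$ where the binomial and collision factors are each exponentially large and only their product is small — is routine but is the part that needs the most care.
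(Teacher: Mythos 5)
The paper gives no proof of this statement---it is imported verbatim from \cite[Theorem 2.3]{fawzi2018efficient}---and your first-moment/union-bound argument in the configuration model (fix $S$ and a candidate small neighbourhood $T$, bound the collision probability, sum over $S$ and $T$, and use $\delta_V d_V>1$ to make the per-size bound $\bigl(c_1(s/|V|)^{\delta_V d_V-1}\bigr)^s$ decay) is exactly the standard route by which that reference and its antecedents establish such expansion results, so your proposal is correct and essentially the same approach. The one spot to tighten is the concluding summation: the uniform bound $2^{-s}$ only gives a constant, so to get $o(1)$ you should split the range into $s\le\sqrt{|V|}$, where each term is at most $\bigl(c_1|V|^{-(\delta_V d_V-1)/2}\bigr)^s$ and the geometric sum is polynomially small, and $s>\sqrt{|V|}$, where $2^{-s}\le 2^{-\sqrt{|V|}}$ beats the at most $|V|$ terms (and note the implicit constraint $t\le|C|$, i.e.\ $\gamma_V\le 1/((1-\delta_V)d_C)$, which is harmlessly absorbed into the choice of $\gamma_V$).
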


\subsection{Quantum Error-Correcting Codes}
\label{subsec:qecc}
A quantum error-correcting code encoding $k$ logical qubits into $n$ physical qubits is a $2^k$-dimensional subspace of $(\mathbb{C}^{2})^{\otimes n}$. Here $\otimes$ denotes tensor product.
Gottesman developed the stabilizer formalism for quantum codes in \cite{gottesman1997stabilizer}, inspired by classical linear codes. The stabilizer formalism serves as a general framework for studying a large class of quantum code. We describe this formalism below.

The Pauli matrices are defined as
{\small
\[
    I_2=\begin{pmatrix}
        1 & 0\\
        0 & 1\\
    \end{pmatrix},~X=\begin{pmatrix}
        0 & 1\\
        1 & 0\\
    \end{pmatrix},~Z=\begin{pmatrix}
        1 & 0\\
        0 & -1\\
    \end{pmatrix},
\]
}
and $Y=iXZ$.
The Pauli group on $n$ qubits, $\mathcal{P}_n$, is the set of all Pauli operators acting on $n$ qubits, with usual matrix multiplication as the group operation:

$ \mathcal{P}_n=\{\omega P_1 \otimes P_2 \otimes \cdots \otimes P_n :P_i \in \{I_2, X, Z, Y\}~\forall~i,\omega=\{\pm{1},\pm{i}\}\}.$

The operator \( P = X^{a_1}Z^{b_1} \otimes X^{a_2}Z^{b_2} \otimes \cdots \otimes X^{a_n}Z^{b_n} \) can be represented by the vector 
\[
p=(a_1~a_2 \cdots a_n \mid b_1~ b_2 \cdots b_n) \in  \mathbb{F}_2^{2n}.
\]
This representation of an operator as a binary vector is called the \textit{symplectic representation}.

A stabilizer group, $\mathcal{S}$, is an abelian subgroup of the Pauli group such that $-I_{2^n} \notin \mathcal{S}$. A quantum stabilizer code $Q_{\mathcal{S}}$ is defined as the simultaneous $+1$ eigenspace of all the operators in $\mathcal{S}$, i.e.,
\[
    Q_{\mathcal{S}} = \{\ket{\psi} : S\ket{\psi} = \ket{\psi} ~ \forall ~ S \in \mathcal{S}\},
\]
where we have used Dirac's bra-ket notation.
The weight of a Pauli operator $P = \omega^l P_1 \otimes P_2 \otimes \cdots \otimes P_n$ is the number of indices $j \in [1,n]$ such that $P_j \neq I_2$. For a quantum stabilizer code $Q_{\mathcal{S}}$ defined via a stabilizer group $\mathcal{S}$, the minimum distance, $d_{\mathrm{min}}$  is defined as the minimum weight of all Pauli operators in $N(\mathcal{S}) \setminus \mathcal{S}$, where $N(\mathcal{S})$ is the normalizer of $\mathcal{S}$ in $\mathcal{P}_n$. This stabilizer code $Q_{\mathcal{S}}$ is denoted as a $[[n,k,d_{\mathrm{min}}]]$ code.

\subsection{{Quantum Erasure Channel}}
\label{subsec:erasure_channel}
The quantum erasure channel is a noise model in which each qubit is lost or erased independently with some probability. This loss can be detected, and the density matrix of the qubit, $\rho$, is replaced with a maximally mixed state, $\frac{I}{2}$. The maximally mixed state can be expressed as $\frac{I}{2} = \frac{1}{4}(\rho + X\rho X^\dagger + Y\rho Y^\dagger + Z\rho Z^\dagger)$ , implying that each erased qubit can be interpreted as being acted upon by the Pauli operators $I$, $X$, $Y$, and $Z$ with equal probability. In this manner, erasure correction is effectively converted into correction of errors with known locations, denoted by a vector $\varepsilon$. Replacing each qubit in the support of \(\varepsilon\) with a maximally mixed state is equivalent to the encoded state being subjected to a uniform Pauli error \(E\) supported on the erasure locations, i.e., \(\mathrm{supp}(E) \subseteq \mathrm{supp}(\varepsilon)\). 


\subsection{CSS Codes}
\label{subsec:css}
Calderbank-Shor-Steane (CSS) codes are a sub-class of stabilizer codes where each stabilizer generator is a tensor product of either $X$-type or $Z$-type Pauli matrices \cite{calderbank1996good,steane1996error}, and are one of the most well-studied classes of quantum codes. A CSS code is defined by two binary classical codes, $\mathcal{C}_X = \ker(H_X)$ and $\mathcal{C}_Z = \ker(H_Z)$, such that ${\mathcal{C}_{Z}^{\perp}} \subseteq \mathcal{C}_X$ (or equivalently, ${\mathcal{C}_{X}^{\perp}} \subseteq \mathcal{C}_Z$). Here, $H_X$ of size $r_X \times n$ and $H_Z$ of size $r_Z \times n$ represent the parity-check matrices of $\mathcal{C}_X$ and $\mathcal{C}_Z$, respectively. The basis of dual code ${\mathcal{C}_X}^\perp$ (resp. ${\mathcal{C}_Z}^\perp$) corresponds to the $X$-type (resp. $Z$-type) stabilizer generators and is used to correct $Z$-type (resp. $X$-type) errors.

The quantum code \( \mathrm{CSS}(\mathcal{C}_X, \mathcal{C}_Z) \) is a stabilizer code with stabilizer generators determined by the following check matrix:

\begin{equation*}
H = \begin{pmatrix}
H_X & 0 \\
0 & H_Z
\end{pmatrix}  
\end{equation*}

The dimension of the CSS code is given by $k=\text{dim}(\mathcal{C}_X/{\mathcal{C}_{Z}^\perp)}=\text{dim}(\mathcal{C}_Z/{\mathcal{C}_{X}^\perp)}=\text{dim}(\mathcal{C}_X) +\text{dim}( \mathcal{C}_Z)-n$. The minimum distance of CSS code is given by, $d_{\mathrm{min}}= \text{min}\{d_X,d_Z\}$, where, $d_X:=\text{min}\{w_H(E) \mid E \in \mathcal{C}_X \backslash {\mathcal{C}_{Z}^{\perp}}\}$ and  $d_Z:=\text{min}\{w_H(E) \mid E \in \mathcal{C}_Z\backslash {\mathcal{C}_{X}^\perp\}}$, where $w_H(\cdot)$ denotes Hamming weight. The CSS code \(\mathrm{CSS}(\mathcal{C}_X, \mathcal{C}_Z)\) is an $[[n,k,d_{\mathrm{min}}]]$  code.

The error operator $E$, is symplectically represented by the binary $2n$-tuple $(E_X, E_Z)$. The syndrome corresponding to $(E_X, E_Z)$ is given by $\sigma=(\sigma_{X}, \sigma_{Z})$, where $\sigma_{Z}= H_{X}  {E_{Z}^{T}}$  (resp. $\sigma_{X}= H_{Z} {E_{X}^{T}}$) is obtained by measuring $X$-type (resp. $Z$-type) stabilizer generators and is used for correcting $Z$-type (resp. $X$-type) errors independently.

Error correction for CSS codes can be performed independently for \(X\)-errors and \(Z\)-errors. Specifically, by considering the syndrome \(\sigma = (\sigma_X, \mathbf{0}_{r_z})\) (respectively, \(\sigma = (\mathbf{0}_{r_x}, \sigma_Z)\)), \(X\)-errors (respectively, \(Z\)-errors) can be corrected separately. In this paper, the description of the algorithm presented in Section \ref{sec:main_result} is restricted to only \(X\)-type errors \(E\), using the syndrome \(\sigma\) obtained by measuring the \(Z\)-type stabilizer generators.

Quantum LDPC codes are a class of stabilizer codes where each stabilizer generator acts on a constant number of qubits, and each qubit is acted upon by a constant number of stabilizer generators. When CSS codes are constructed with the classical parity-check matrices $H_X$ and $H_Z$ being sparse, they correspond to quantum LDPC codes. Precisely, these matrices have a constant row weight and a constant column weight, which characterizes the LDPC property.

\section{Hypergraph Product Codes}
\label{sec:hgp}
The HGP construction by Tillich and Zémor \cite{tillich2013quantum} provides a framework for constructing quantum CSS codes from two classical codes.

Let \(\mathcal{C}\) be a classical code with parity-check matrix \(H\). The transpose code \(\mathcal{C}^T\) is obtained as the null-space of \(H^T\). Informally, \(\mathcal{C}^T\) is the code obtained by reversing the roles of check and variable nodes in the Tanner graph of $\mathcal{C}$. 

Let \(\mathcal{C}_1\) be an \([n_1, k_1, d_1]\) classical code with parity-check matrix \(H_1\) of size $r_1 \times n_1$, and \(\mathcal{C}_2\) be an \([n_2, k_2, d_2]\) classical code with parity-check matrix \(H_2\) of size $r_2 \times n_2$.

The product code \(\mathcal{C}_1 \otimes \mathcal{C}_2\) is the classical code with a parity-check matrix defined as:
\[
H = \begin{pmatrix}
    I_{n_1} \otimes H_2 \\
    H_1 \otimes I_{n_2}
\end{pmatrix}
\]

The hypergraph product of two classical codes \(\mathcal{C}_1\) and \(\mathcal{C}_2\), denoted as \(\mathrm{HGP}(\mathcal{C}_1, \mathcal{C}_2)\), is defined as the quantum CSS code, \(\mathrm{CSS}(\mathcal{C}_X, \mathcal{C}_Z)\), where,  $\mathcal{C}_X^T = \mathcal{C}_1 \otimes \mathcal{C}_2^T$, and $\mathcal{C}_Z^T = \mathcal{C}_1^T \otimes \mathcal{C}_2$.
The check matrix of $\mathrm{HGP}(\mathcal{C}_1, \mathcal{C}_2)$ is given by \begin{equation*}
    H=\begin{pmatrix}
    H_X & 0 \\
    0 & H_Z \\
\end{pmatrix}
\end{equation*}
where $H_X = \begin{pmatrix}
    I_{n_1} \otimes H_2 & H_1^T \otimes I_{r_2}
\end{pmatrix}$ and $H_Z = \begin{pmatrix}
    H_1 \otimes I_{n_2} & I_{r_1} \otimes H_2^T
\end{pmatrix}$. It can be verified that \(H_X H_Z^T = 0\), satisfying the CSS property. Furthermore, if \(\mathcal{C}_1\) and \(\mathcal{C}_2\) are classical LDPC codes, then \(\mathrm{HGP}(\mathcal{C}_1, \mathcal{C}_2)\) is a quantum LDPC code.

Let $\mathcal{T}(H_1):=\mathcal{T}(V_1\cup C_1, E_1)$ be the Tanner graph for the classical code $\mathcal{C}_1$ and $\mathcal{T}(H_2):=\mathcal{T}(V_2\cup C_2, E_2)$  be the Tanner graph for the classical code $\mathcal{C}_2$. The Tanner graph of $\mathrm{HGP}(\mathcal{C}_1, \mathcal{C}_2)$ is a 4-partite graph $\mathcal{T}((V_1 \times V_2) \cup (C_1 \times C_2) \cup (V_1 \times C_2) \cup (C_1 \times V_2), E)$.

Here, the qubits are indexed by the set \(V_Q:=(V_1 \times V_2) \cup (C_1 \times C_2)\). The \(X\)-stabilizer generators are indexed by the set \(C_X:=C_1 \times V_2\), and the \(Z\)-stabilizer generators are indexed by the set \(C_Z:=V_1 \times C_2\). 

Two nodes \(u_1,u_2 \in V_Q \times (C_X \cup C_Z)\) are connected if: the first coordinate of \(u_1\) and \(u_2\) is the same, and the second coordinate is connected in \(\mathcal{T}(H_2)\), or
the second coordinate of \(u_1\) and \(u_2\) is the same, and the first coordinate is connected in \(\mathcal{T}(H_1)\). Fig. \ref{fig:HGP} illustrates this construction.

\begin{figure}
    \centering
   \scalebox{0.7} {\includegraphics[width=1\linewidth]{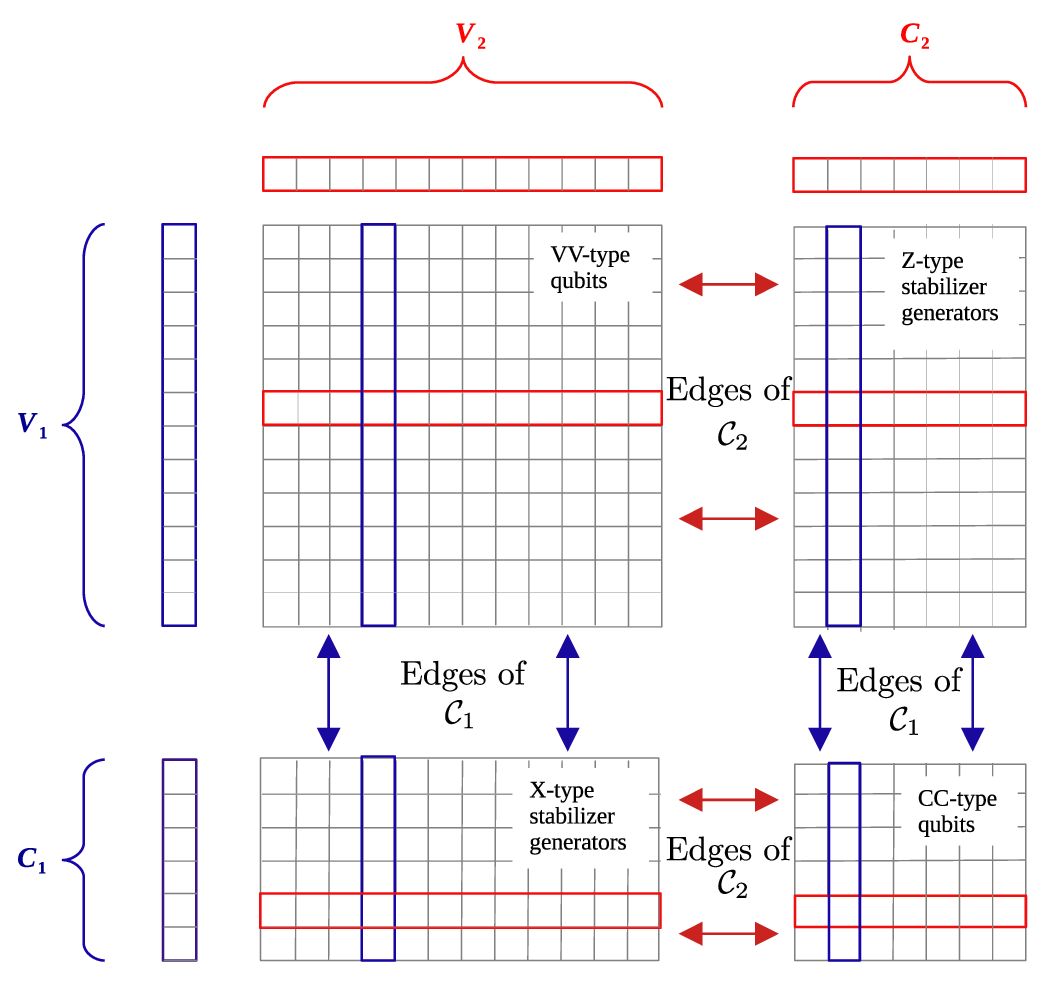}}
   \caption{The Tanner graph for the HGP code constructed from classical codes 
$\mathcal{C}_1$ with Tanner graph $\mathcal{T}(V_1 \cup C_1, E_1)$ 
and $\mathcal{C}_2$ with Tanner graph $\mathcal{T}(V_2 \cup C_2, E_2$).}
    \label{fig:HGP}
\end{figure}

\subsection{Quantum Expander Codes}
\label{subsec:quantum_expander}
The HGP construction of Tillich and Zémor, applied to classical expander codes, results in a class of quantum LDPC codes with constant rate and minimum distance of order \(\Theta(\sqrt{n})\), where $n$ is the length of the resulting quantum code. Such codes are referred to as quantum expander codes. 

Let \(\mathcal{T}(V \cup C, E)\) be a biregular \((\gamma_V, \delta_V, \gamma_C, \delta_C)\)-left-right-expander graph with left degree \(d_V\) and right degree \(d_C\), such that \(d_V \leq d_C\). Consider the classical expander code, \(\mathcal{C}\), associated with the Tanner graph \(\mathcal{T}(H)\), with minimum distance \(d_{\mathrm{min}}\) and parity-check matrix \(H\) of size \(|C| \times |V|\).  

The hypergraph product of \(\mathcal{C}\) with itself results in a quantum expander code with parity-check matrices:  
\[
H_X = \begin{pmatrix}
    I_{|V|} \otimes H~H^T \otimes I_{|C|}
\end{pmatrix},
H_Z = \begin{pmatrix}
    H \otimes I_{|V|}~I_{|C|} \otimes H^T
\end{pmatrix}.
\]
The parameters of this code are
\begin{equation*}
    [[N=|V|^2 + |C|^2,\ k \geq (|V| - |C|)^2,\ \ge \min\{d_{\mathrm{min}}, d_{\mathrm{min}}^T\}]],
\end{equation*}
where \(d_{\mathrm{min}}^T\) is the minimum distance of the transpose code \(\mathcal{C}^T\).
Further, this quantum code is LDPC with generators of weight \((d_V + d_C)\), and each qubit is checked by at most \((2 \max(d_V, d_C))\) generators.

We denote by $V^2$ (resp. $C^2$), the $V \times V$ type (resp. $C \times C$ type) qubits. The $X$-stabilizer generators are indexed by the set $C_X := C \times V$ of size $R_X$, and $Z$-stabilizer generators are indexed by the set $C_Z := V \times C$ of size $R_Z$. Denote by $\mathcal{T}(H_Z)$ (resp. $\mathcal{T}(H_X$) the Tanner graph with the set of variable nodes $V^2 \cup C^2$ and the set of check nodes $C_Z$ (resp. $C_X$).


\subsection{Pruned Peeling Decoder}
\label{subsec:pruned_peeling}
The pruned peeling erasure decoder for quantum CSS codes proposed by Connolly et al. in \cite{connolly2024fast} is inspired by the classical peeling decoder described in  \cite{luby2001efficient},\cite{richardson2001efficient}. 
Consider a classical LDPC code with Tanner graph $\mathcal{T}(H)$. Given an erasure vector $\varepsilon$, two terms are defined: \textit{dangling check} and \textit{dangling bit}. A dangling check is a check node that is incident to a single erased bit and this erased bit is referred to as a dangling bit.

The classical peeling decoder for classical LDPC codes is an iterative algorithm that uses the syndrome value at the  dangling check to determine the erased value at the corresponding dangling bit and then modifies the syndrome accordingly. Using this updated syndrome, the process repeats until the erasures are fully corrected or peeling is not possible further.

A stopping set for $\mathcal{T}(H)$ is defined as a subset of bits that does not contain a dangling bit. Thus, if a stopping set lies within the erasure location, it cannot be corrected using the classical peeling decoder, resulting in decoder failure.
 
 Now, consider decoding the $X$-type errors using the syndrome obtained by measuring $Z$-type stabilizer generators. This problem can be cast as a classical decoding problem of a linear code with Tanner graph $\mathcal{T}(H_Z)$, where $H_Z$ is the parity-check matrix whose rows correspond to $Z$-type stabilizer generators. The classical peeling decoder can then be applied to a quantum CSS code directly; however, it tends to fail more frequently due to an increased number of stopping sets. This increase occurs because the support of an $X$-type generator forms a stopping set for the Tanner graph $\mathcal{T}(H_Z)$. This is due to the fact that the symplectic representation of the $X$-type stabilizers are codewords of the linear code $\ker(H_Z)$. Such stopping sets, induced by stabilizers, are referred to as stabilizer stopping sets.

The idea of the pruned peeling decoder is to iteratively correct all dangling bits using the syndrome at the dangling check. For the remaining errors that are not corrected (as they are part of a stabilizer stopping set), the approach is to identify a stabilizer generator $S$ supported entirely within the erasure locations and then arbitrarily remove a qubit from this support. This works because the error $E$ and the error $ES$ act identically on this qubit. By breaking the stopping sets in this manner, the decoder can correct more erasures. 

The pruned peeling decoder can be applied to any quantum CSS code; however, when applied to hypergraph product codes, it does not show significant improvement over the classical peeling decoder due to the nature of hypergraph stopping sets, as explained next.

\subsection{Stopping Sets for Hypergraph Product Codes}
\label{subsec:stopping_sets_hgp}

Consider a hypergraph product code $\mathrm{HGP}(\mathcal{C}, \mathcal{C})$ following the notations from  Section \ref{subsec:quantum_expander}. Let $\mathcal{T}(H)$ be the Tanner graph of $\mathcal{C}$, and $\mathcal{T}(H^T)$ be the Tanner graph of the transpose code $\mathcal{C}^T$. There are two types of stopping sets for the Tanner graph $\mathcal{T}(H_Z)$:

1. \textit{Stabilizer stopping sets}: These types of stopping sets are induced by the $X$-type stabilizer generators, as explained previously.  
\begin{figure}
    \centering
   \scalebox{0.65} {\includegraphics[width=0.75\linewidth]{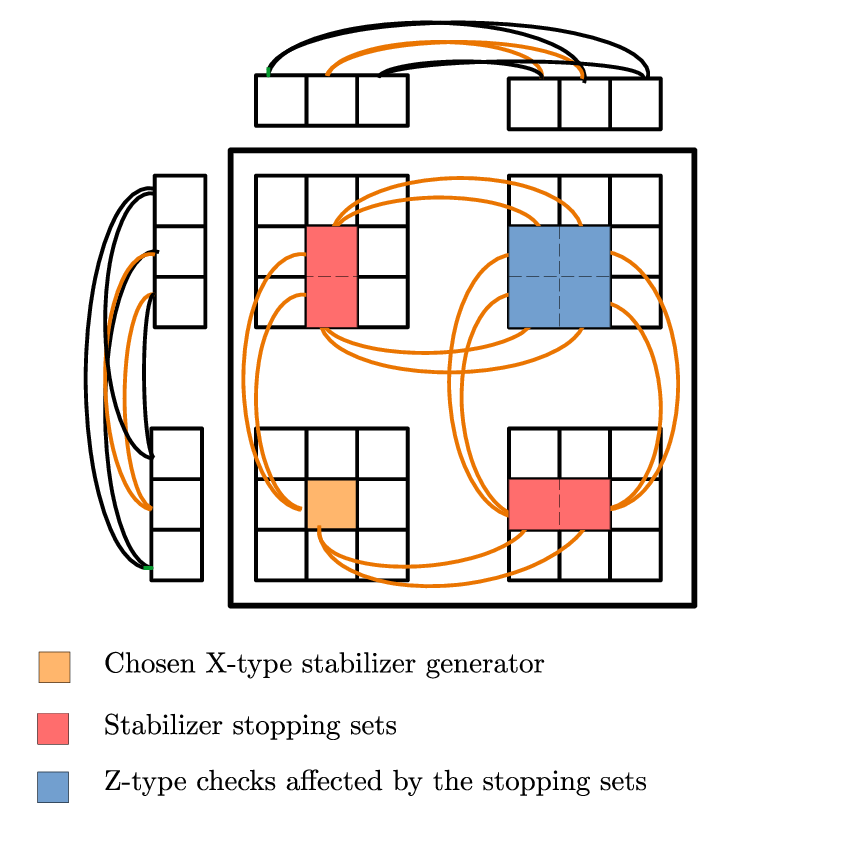}}
    \caption{Stabilizer stopping set for the HGP code constructed from two classical $3$-bit repetition code.}
    \label{fig:stabss}
\end{figure}
See Fig. \ref{fig:stabss} for an example.

2. \textit{Vertical and horizontal stopping sets}: These types of stopping sets arise from the stopping sets of the Tanner graphs $\mathcal{T}(H)$ and $\mathcal{T}(H^T)$. If $S_V$ is a stopping set for $\mathcal{T}(H)$, then for all $v \in V$, $\{v\} \times S_V$ is a stopping set for the Tanner graph $\mathcal{T}(H_Z)$. These stopping sets $\{v\} \times S_V$ are referred to as horizontal stopping sets. Similarly, if $S_C$ is a stopping set for $\mathcal{T}(H^T)$, then for all $c \in C$, $S_C \times \{c\}$ is a stopping set for the Tanner graph $\mathcal{T}(H_Z)$. These stopping sets $S_C \times \{c\}$ are referred to as vertical stopping sets. 
\begin{figure}
    \centering
  \scalebox{0.6}  {\includegraphics[width=1\linewidth]{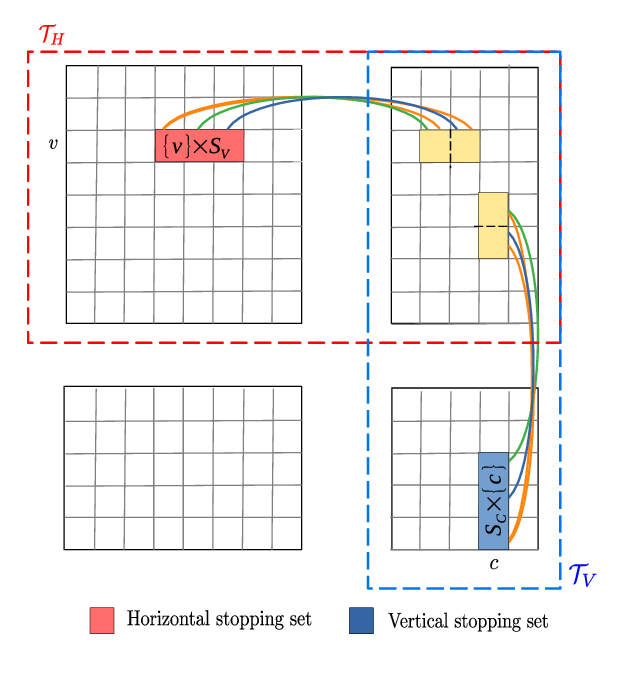}}
    \caption{Horizontal and vertical stopping sets for the HGP code.}
    \label{fig:vhss}
\end{figure}
Fig. \ref{fig:vhss} illustrates this.

The pruned peeling decoder, when applied to hypergraph product codes, can successfully break out of some stabilizer stopping sets. However, this decoder fails when the erasure pattern lies within vertical or horizontal stopping sets. Furthermore, each horizontal (resp. vertical) stopping set in $\mathcal{T}(H)$ (resp. $\mathcal{T}(H^T)$) results in $\sqrt{N}$ copies of the corresponding stopping set in $\mathcal{T}(H_Z)$. Consequently, vertical and horizontal stopping sets account for the majority of failures of the pruned peeling decoder. As a result, pruned peeling decoder provides only a slight improvement over the classical peeling decoder for hypergraph product codes. 

The vertical-horizontal (VH) decoder, discussed next, effectively addresses the vertical and horizontal stopping sets.

\subsection{Vertical-Horizontal (VH) Decoder for Hypergraph Product Codes}
\label{subsec:VH}

After observing that the pruned peeling decoder fails due to horizontal and vertical stopping sets, Connolly et al. \cite{connolly2024fast} proposed the VH erasure decoder for hypergraph product codes. This decoder leverages the product structure of hypergraph product codes to decompose the erasures remaining after pruned peeling into vertical and horizontal clusters, which are then solved sequentially using classical Gaussian decoder.

For the description of the VH erasure decoder, consider the hypergraph product of the classical code $\mathcal{C}=\ker(H)$, having a Tanner graph $(V \cup C,E)$, with itself. The Tanner graph of $\mathrm{HGP}(\mathcal{C},\mathcal{C})$ is given by $\mathcal{T}(V_Q \cup (C_X \cup C_Z), E)$. Let $\mathcal{T}_V$  (resp. $\mathcal{T}_H$) be the subgraph induced by nodes $((V \cup C)\times C)$ (resp. $(V \times (V \cup C))$). The subgraph $\mathcal{T}_V$  (resp. $\mathcal{T}_H$) contains the vertical (resp. horizontal) edges of $\mathcal{T}(V_Q \cup (C_X \cup C_Z), E)$.

Given an erasure vector $\varepsilon$, let $V(\varepsilon)$ be the set of erased qubits and the checks connected to these qubits. A \textit{vertical cluster} (resp. \textit{horizontal cluster}) contains a subset of $V(\varepsilon)$  that forms a connected component in the graph $\mathcal{T}_V$ (resp. $\mathcal{T}_H$).

From the structure of $\mathcal{T}(V_Q \cup (C_X \cup C_Z), E)$, no two vertical clusters (resp. horizontal clusters) can intersect with each other. A vertical cluster and a horizontal cluster can intersect at most in one check, which is referred to as a \textit{connecting check}. A check internal to a single cluster that is not a connecting check is called an \textit{internal check}.

A VH graph for \(\varepsilon\) is a bipartite graph with vertices representing the horizontal and vertical clusters. An edge exists between two vertices in this graph if the corresponding clusters intersect at a connecting check.

A cluster that does not contain any connecting checks, i.e., it only has internal checks, is called an \textit{isolated cluster}. These are the standalone vertices of the VH graph that do not have any edges.
A cluster that contains exactly one connecting check, along with other internal checks, is called a \textit{dangling cluster}. These correspond to the vertices of the VH graph that are connected to exactly one other vertex. A cluster containing more than one connecting check, in other words, one that is neither isolated nor dangling, is called a \textit{non-dangling cluster}. A dangling cluster is further classified as either \textit{free} or \textit{frozen} according the following definition \cite{connolly2024fast}.

\begin{definition}
If there exists an error pattern  within a dangling cluster that has a trivial syndrome at the internal checks and syndrome $1$ at the connecting check, then it is a free check. Otherwise, the check is called frozen.
\end{definition}

Equivalently, a dangling check is \textit{frozen} if, and only if, for all error patterns within a dangling cluster consistent with the original syndrome at the internal checks, the syndrome at the connecting check is either always $0$ or always $1$. This is because if there were two errors, both consistent with the syndrome at the internal checks, but having different values at the connecting check, then the syndrome corresponding to the sum of these two errors is $0$ on all the internal checks and $1$ at the connecting check, implying that the connecting check is free. When the check is classified as free or frozen, we call the corresponding dangling cluster as free or frozen as well. 

The VH decoder begins by classifying each cluster as isolated, dangling or non-dangling. Then, the dangling clusters are classified as free or frozen. An isolated cluster is corrected independently of other clusters, as it is not connected to any other cluster. Gaussian elimination is employed to correct such clusters. For a frozen dangling cluster, although it has a connecting check, this connecting check is frozen, meaning that any correction has the same effect on the connecting check. Therefore, a frozen dangling cluster is corrected in the same manner as an isolated cluster. Correcting a free dangling cluster is postponed until the end of the procedure. During the intermediate steps, this cluster, along with its connecting check, is removed from the Tanner graph $\mathcal{T}(H_Z)$ while the remaining erasures are corrected. After correcting the other erasures and updating the syndrome, a free dangling cluster is corrected using the updated syndrome within that cluster.

This is an iterative algorithm that iterates over all the clusters sequentially. It corrects isolated clusters and frozen dangling clusters independently of other clusters. Once corrected, these clusters are removed from the VH graph, potentially making the remaining clusters correctable.

The Gaussian decoder is used to determine whether a connecting check is frozen or free, as well as for correction within a cluster. The computational complexity of this process is \(O(n^3)\) for a cluster of size \(O(n)\).

This algorithm fails if the VH graph contains a cluster-cycle arising from stabilizer stopping sets. To address this, the authors modify the algorithm by removing free checks from all clusters, not just from free dangling clusters. This modification helps eliminate some cycles but introduces additional complexity, as it requires classifying multiple checks per cluster as frozen or free, rather than just a single check as in the case of dangling clusters. Consequently, the modified algorithm is slower than the VH decoder and has complexity exceeding \(O(n^3)\) for a cluster of size \(O(n)\). However, this modification still does not account for all the stabilizer stopping sets. Therefore, despite this modification, the algorithm fails if a cycle still persists in the VH graph.

\section{A Linear-time erasure decoder for quantum expander codes}
\label{sec:main_result}

Our algorithm can be divided into two phases. In the first phase, we apply the classical peeling decoder to the Tanner graph $\mathcal{T}(H_Z)$. That is, we iteratively correct all dangling bits using the syndrome value at the corresponding dangling checks. The second phase, as given in Algorithm \ref{algo:main_algo}, begins by decomposing the remaining erasures into clusters, as defined and described in Section \ref{subsec:VH}. 
\begin{figure}
    \centering
   \scalebox{0.5} {\includegraphics[width=1\linewidth]{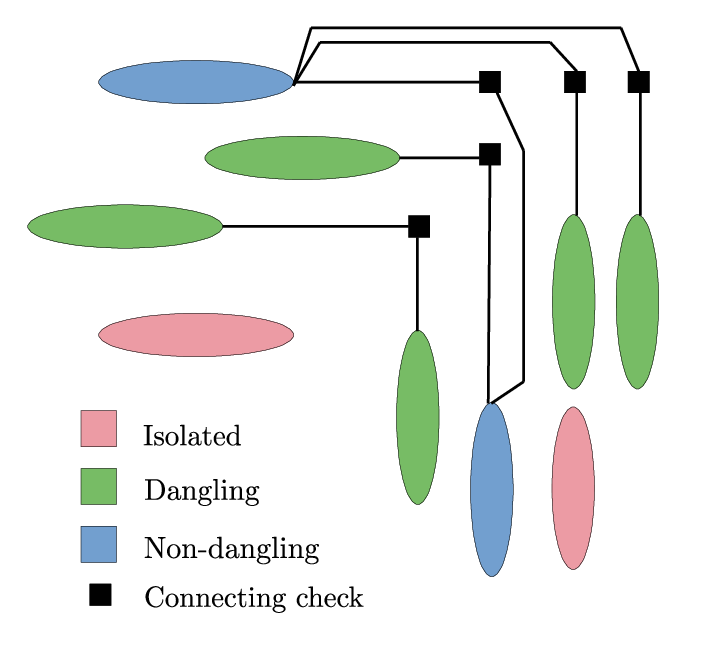}}
    \caption{Illustration of different types of clusters.}
    \label{fig:clusters}
\end{figure}

Fig. \ref{fig:clusters} illustrates this.

Given a cluster $\kappa$, if it is dangling, we first classify it as either free or frozen. If the cluster cannot be classified at this stage, it will remain untouched for now and will be revisited at a later point. The classification procedure is described in detail in Section \ref{subsec:classify_analysis}.

If the cluster $\kappa$ is isolated or frozen dangling, we correct it  using the classical linear-time erasure decoding algorithm \cite{viderman2013linear}, given in Algorithm \ref{algo:peeling}. Classical erasure decoding can be used here, as each horizontal (resp. vertical) isolated or frozen dangling cluster belongs to only a single classical Tanner graph, $\mathcal{T}(H)$ (resp. $\mathcal{T}(H^T)$), corresponding to the classical expander code. This is because such clusters do not have a connecting check in the case of an isolated cluster, or have a frozen connecting check in the case of a frozen dangling cluster.

The notation used in Algorithm \ref{algo:peeling} is as follows: $V_\kappa$ denotes the set of variable nodes and $C_\kappa$ denotes the set of check nodes within the cluster $\kappa$, respectively. For a variable node, $v \in V_\kappa$, $\Gamma(v)$ is a vector of length $R_Z$ supported on the neighborhood of $v$ and $e_v$ is the weight-one binary vector of length $N$ supported on the $v^{\text{th}}$ variable node. Also, $\sigma_{\kappa}$ denotes the syndrome vector of length $R_Z$ restricted to the check nodes in the cluster, and $\sigma_{\kappa, u}$ indicates the syndrome value at the $u^{\text{th}}$ check node.

Algorithm \ref{algo:peeling} begins by initializing a set $\mathrm{Unique}$, which contains all the check nodes in $C_\kappa$ that are connected to only one variable node $v \in V_\kappa$. We will refer to such checks as unique checks. While there exists a variable node $v$ that is connected to a unique check $u$, we check the syndrome value, $\sigma_{\kappa, u}$, at this particular check. If $\sigma_{\kappa, u} = 1$, then we add an error vector supported at the $v^{\text{th}}$ node to the estimate $\hat{E}_i$ and update the syndrome accordingly. Then we remove the variable node $v$ from $V_\kappa$. Next, we check if any of the check nodes in the neighborhood of $v$ have become unique checks. If they have become unique checks, they are added to the set $\mathrm{Unique}$. Then the process repeats.

\begin{algorithm}
    \caption{Algorithm to decode from erasures \cite{viderman2013linear} }
\label{algo:peeling}
\vspace{0.3em}
\textbf{Input:} For a dangling cluster $\kappa$, set of erasure locations $V_\kappa$, set of checks $C_\kappa$, syndrome  $\sigma_{\kappa} \in \mathbb{F}_{2}^{R_Z}$ \\[-0.3em]
\textbf{Output:} Estimate of error $\hat{E}_{\kappa}$ 
\vspace{0.3em}
\hrule
\vspace{0.3em}
\begin{algorithmic} [1]
\Procedure{Peel}{}
\State $\hat{E}_{0}=\mathbf{0}_{N};~ \sigma_{0} = \sigma_{\kappa}  ; ~ i = 0;~ \mathrm{unpeeled} = |V_{\kappa}|$
\State $\mathrm{Unique}=\{u \in  C_\kappa\ :~ |V_\kappa \cap \mathrm{supp}(\Gamma(u))|=1\}$
\While{$\exists ~ v \in V_{\kappa}  ~\text{and } u \in \mathrm{Unique} ~\text{such that}  ~\mathrm{supp}(\Gamma(u)) \cap V_\kappa=v$}
    \If{$\sigma_{\kappa,u}=1$}
        \State $\hat{E}_{i+1}=\hat{E}_i \oplus e_v $
        \State $\sigma_{i+1} = \sigma_{i} \oplus \Gamma(v)$ 
    \EndIf 
    \State $V_\kappa=V_\kappa \backslash \{v\} $
    \For{all $u' \in \mathrm{supp}(\Gamma(v))$}
        \If{$|\mathrm{supp}(\Gamma(u')) \cap V_\kappa|=1$}
        \State $\mathrm{Unique}=\mathrm{Unique} \cup \{u'\}$
         \Else 
         \State $\mathrm{Unique}=\mathrm{Unique} \backslash \{u'\}$
        \EndIf
         
\EndFor
\State $i = i + 1, \mathrm{unpeeled} = \mathrm{unpeeled} - 1 $
\EndWhile
\State \Return $\hat{E}_{i}, \sigma_{i}, \mathrm{unpeeled}$
\EndProcedure
\end{algorithmic}
\end{algorithm}

The estimated error within this cluster, $\hE_\mathrm{PEEL}$, is then added to  $\hat{E}$ , and  $\sigma_\mathrm{PEEL} \oplus \sigma_{\kappa}$  is added to  $\sigma$ . Finally, $V_{\kappa}$ is removed from erasure $\varepsilon $. 

In the case of a free dangling cluster, $\kappa$, the cluster and its free check are removed from the Tanner graph $\mathcal{T}(H_Z)$, added to a stack, and the correction is delayed until all correctable frozen dangling and isolated clusters are corrected. Note that each such cluster becomes essentially an isolated cluster later, so we use Algorithm \ref{algo:peeling} to correct it.

The process of correcting clusters independently can be viewed as peeling each cluster from the VH graph. After iteratively correcting these clusters, the remaining errors will be the ones which couldn't be peeled during any step in the algorithm and those that are part of cluster-cycles in the VH graph arising from the stabilizer stopping sets. These remaining erasures are then corrected using the SSF erasure-decoding algorithm, as described in Algorithm \ref{algo:SSF}.

The SSF algorithm, proposed by Leverrier et al. in \cite{leverrier2015quantum}, is the quantum analogue of the bit-flip error-decoding algorithm introduced by Gallager for classical LDPC codes in \cite{gallager1962low} and analyzed by Sipser and Spielman for classical expander codes in \cite{sipser1996expander}.


The SSF decoder works similar to the bit-flip decoder by dividing its execution into several rounds, where the algorithm attempts to reduce the syndrome weight in each round. However, unlike the bit-flip decoder, where the syndrome weight can be reduced by flipping a single bit, the SSF decoder requires flipping a set of qubits $F$ in order to reduce the syndrome weight. Here, by flipping a qubit, we mean that the qubit undergoes an $X$-type error.

Consider correcting $X$-type errors using the syndrome obtained by measuring $Z$-type stabilizer generators. The authors in \cite{leverrier2015quantum} define a term \textit{small set}, which is a subset of an $X$-type generator's support, and the set of all small sets is denoted by $\mathcal{F}$. The set $F$, which when flipped reduces the syndrome weight, is then chosen from $\mathcal{F}$. We refer the readers to Chapter 4 of \cite{grospellier2019constant} for more details.

SSF for errors can be modified to work for erasures or, more specifically, errors with known locations by defining small sets the following way. A set $F\subseteq V_Q$ is a small set if and only if it is included in the support of an $X$-type generator as well as belongs to the erasure set.  SSF is analysed for erasures in Section \ref{subsec:SSF_analysis}.

Finally, Algorithm \ref{algo:main_algo} returns a Failure if the residual syndrome is not zero; otherwise, it returns the estimated error $\hat{E}$. The analysis of our algorithm is provided in Section \ref{sec:main_analysis}.

\begin{algorithm}
    \caption{Dangling Cluster Classification }
\label{algo:classify}
\vspace{0.3em}
\textbf{Input:} For a dangling cluster $\kappa$, set of erasure locations $V_\kappa$, set of checks $C_\kappa$ \\[-0.3em]
\textbf{Output:} Either $\mathrm{Free}$ or $\mathrm{Unclassified}$ 
\vspace{0.3em}
\hrule
\vspace{0.3em}
\begin{algorithmic} [1]
\Procedure{Classify}{}
\State Set the syndrome ($\sigma_{\kappa}^{\mathrm{new}}$) at the internal checks in $C_{\kappa}$ as $0$ and at the connecting check as $1$
\State $\hE_{\mathrm{PEEL}},~ \sigma_{\mathrm{PEEL}},~ \mathrm{unpeeled}$ = \textsc{PEEL($V_{\kappa}, \sigma_{\kappa}^{\mathrm{new}}$)} 
\If{$\mathrm{unpeeled} \neq 0$}
\State \Return $\mathrm{Unclassified}$ \Comment{Procedure \textsc{PEEL} did not terminate successfully}
\EndIf

\If{ $\sigma_{{\mathrm{PEEL}}_{|C_{\kappa}}} = \mathbf{0}$} \Comment{If the syndrome restricted to the check $C_{\kappa}$ is trivial}
\State \Return $\mathrm{Free}$
\EndIf
\State \Return $\mathrm{Frozen}$
\EndProcedure
\end{algorithmic}
\end{algorithm}

\begin{algorithm}
    \caption{Small-Set-Flip (SSF) Algorithm \cite{grospellier2019constant} }
\label{algo:SSF}
\vspace{0.3em}
\textbf{Input:} An erasure vector $\varepsilon \in \mathbb{F}_{2}^{N}$, a syndrome $\sigma \in \mathbb{F}_{2}^{R_Z}$ \\[-0.3em]
\textbf{Output:} Estimate of error $\hat{E}$ 
\vspace{0.3em}
\hrule
\vspace{0.3em}
\begin{algorithmic} [1]
\Procedure{Small-Set-Flip}{}
\State $\hat{E}_{0}=\mathbf{0}_{N};~ \sigma_{0} = \sigma_{\kappa}  ; ~ i = 0$
\While{$\exists F$ such that $\mathrm{supp}(F)  \in \mathcal{F}: |\sigma_i|-|\sigma_i \oplus \sigma(F)|\geq \beta d_C |\mathrm{supp}(F)|$}
\State $ \mathrm{argmax}_{F \in \mathcal{F}} \frac{|\sigma_i| - |\sigma_i \oplus \sigma(F)|}{|F|}$
\State $\hE_{i+1} = \hE_{i} \oplus F_{i}$
\State $\sigma_{i+1} = \sigma_{i} \oplus \sigma(F_{i})$
\State $i = i + 1 $
\EndWhile
\State \Return $\hE_{i}, \sigma_{i}$
\EndProcedure
\end{algorithmic}
\end{algorithm}

\begin{algorithm}
    \caption{Linear-time Erasure Decoder for Quantum Expander Codes }
\label{algo:main_algo}
\vspace{0.3em}
\textbf{Input:} An erasure vector $\varepsilon \in \mathbb{F}_{2}^{N}$, a syndrome $\sigma \in \mathbb{F}_{2}^{R_Z}$ \\[-0.3em]
\textbf{Output:} Either Failure or an $X$-type error $\hat{E} \in \{I,X\}^{N}$ such that $\sigma(\hat{E})=\sigma$ and $\mathrm{supp}(\hat{E}) \subseteq \mathrm{supp}(\varepsilon)$
\vspace{0.3em}
\hrule
\vspace{0.3em}
\begin{algorithmic} [1]
\Procedure{Linear-time Erasure Decoder}{}
\State $\hE = \mathbf{0}_{N}$ \Comment{$\hE$: error estimate, initialized as a zero vector}
\State $E_{\mathrm{res}} = \varepsilon$ \Comment{$E_{\mathrm{res}}$: set of residual erasures}
\State $L$ = [] \Comment{$L$: an empty stack}
\While{there exists an isolated or a dangling cluster $\kappa$}
\If{$\kappa$ is dangling}
\State $\mathrm{output} =$ \textsc{CLASSIFY($V_{\kappa}'$)} \Comment{$V_{\kappa}'$: a copy of $V_{\kappa}$}
\If{$\mathrm{output}= \mathrm{Unclassified}$}
\State \textbf{continue} \Comment{Go to line 5}
\EndIf
\EndIf
\If{$\kappa$ is $\mathrm{Isolated}$ or $\mathrm{Frozen}$}
\State $\hE_{\mathrm{PEEL}}, \sigma_{\mathrm{PEEL}}$ = \textsc{PEEL($V_{\kappa}, \sigma_{\kappa}$)} 
\State $\hE = \hE \oplus \hE_{\mathrm{PEEL}},~ E_{\mathrm{res}} = E_{\mathrm{res}} \oplus \hE_{\mathrm{PEEL}},~\sigma = \sigma \oplus (\sigma_{\mathrm{PEEL}} \oplus \sigma_{\kappa})$
\Else \Comment{When $\kappa$ (and the connecting check) is free}
\State Remove the free connecting check $c$ of $\kappa$ from the Tanner Graph $\mathcal{T}(H_{Z})$
\State Add the pair $(\kappa,c)$ to the stack $L$
\State Set $E_{\mathrm{res}_{|V_{\kappa}}} = \mathbf{0}_{|V_{\kappa}|}$ \Comment{$E_{\mathrm{res}_{|V_{\kappa}}}$: residual errors restricted to $V_{\kappa}$}
\EndIf
\EndWhile
\While{the stack $L$ is non-empty}
\State Pop a pair $(\kappa,c)$ from the stack $L$
\State Add the check node $c$ to the Tanner Graph $\mathcal{T}(H_{Z})$
\State $\hE_{\mathrm{PEEL}}, \sigma_{\mathrm{PEEL}}$ = \textsc{PEEL($V_{\kappa}, \sigma_{\kappa}$)}
\State $\hE = \hE \oplus \hE_{\mathrm{PEEL}},~ E_{\mathrm{res}} = E_{\mathrm{res}} \oplus \hE_{\mathrm{PEEL}},~\sigma = \sigma \oplus (\sigma_{\mathrm{PEEL}} \oplus \sigma_{\kappa})$
\EndWhile
\State $\hE_{\mathrm{SSF}}, \sigma_{\mathrm{SSF}}$ = \textsc{SMALL-SET-FLIP($E_\mathrm{res}, \sigma$)} 
\State $\hE = \hE \oplus \hE_{\mathrm{SSF}},~ E_\mathrm{res} = E_\mathrm{res} \oplus \hE_{\mathrm{SSF}}$
\If{$\sigma_{\mathrm{SSF}} \neq \mathbf{0}_{R_Z}$}
\State \Return Failure
\EndIf
\State \Return $\hE$

\EndProcedure
\end{algorithmic}
\end{algorithm}

\begin{remark}
    Algorithm \ref{algo:main_algo} (in the current form) may result in an infinite loop if the condition on line 8 is always true after some point, due to classification failure. To avoid this, we can keep a variable $\hE_{i}$ for the $i^{\text{th}}$ iteration and check if the error in the current iteration is the same as in the previous one. If it is, then we break the while loop. For simplicity, we have not added this in the algorithm.
\end{remark}

\section{Analysis of Algorithm \ref{algo:main_algo}}
\label{sec:main_analysis}

In this section, we provide the proof of correctness of Algorithm \ref{algo:main_algo}.

\subsection{Analysis of Algorithm \ref{algo:peeling}} \label{subsec:peeling_analysis}
\begin{theorem} \label{theo:peeling}
    Consider a classical expander code associated with a bipartite Tanner graph $\mathcal{T}(V \cup C, E)$, which is $(\gamma, \delta)$-left expanding for constants $\gamma, \delta > 0$. Any error $E \subseteq \varepsilon$, where $\varepsilon$ is the set of erasure locations such that $|\varepsilon| \leq \gamma |V|$, can be corrected by Algorithm \ref{algo:peeling} in linear time.
\end{theorem}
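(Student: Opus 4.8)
The plan is to argue that Algorithm~\ref{algo:peeling} is a faithful implementation of Viderman's erasure-decoding procedure~\cite{viderman2013linear} restricted to the erased coordinates, and then to invoke the expansion hypothesis to guarantee that the ``unique check'' peeling process never gets stuck before all erased bits are resolved. The two things to establish are (i) \emph{correctness}: when the procedure terminates with $\mathrm{unpeeled}=0$, the returned $\hat{E}$ satisfies $\mathrm{supp}(\hat{E})\subseteq\varepsilon$ and has the prescribed syndrome on $C$; and (ii) \emph{progress}: under the bound $|\varepsilon|\le\gamma|V|$ the $\mathrm{Unique}$ set is nonempty at every iteration in which $V_\kappa\neq\emptyset$, so the algorithm cannot halt prematurely. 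Linearity of the running time then follows by a standard bookkeeping argument on the Tanner graph.

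For progress, the key step is the expansion-based ``unique neighbour'' argument. Let $S\subseteq\varepsilon$ be the set of still-unpeeled erased bits at some iteration, $S\neq\emptyset$. Since $S\subseteq\varepsilon$ we have $|S|\le\gamma|V|$, so by $(\gamma,\delta)$-left expansion $|\Gamma(S)|\ge(1-\delta)d_V|S|$. A double-counting of the edges leaving $S$ (there are exactly $d_V|S|$ of them, since the graph is $d_V$-left-regular) shows that the number of checks adjacent to two or more vertices of $S$ is at most $d_V|S|-|\Gamma(S)|\le\delta d_V|S|$, hence the number of checks adjacent to \emph{exactly one} vertex of $S$ is at least $(1-2\delta)d_V|S|$. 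Provided $\delta<1/2$ this quantity is strictly positive, so there is at least one unique check, i.e.\ $\mathrm{Unique}$ (after the bookkeeping in the \texttt{for} loop) is nonempty; the loop therefore continues until $|V_\kappa|=0$. I would note here that the theorem as stated should carry the implicit hypothesis $\delta<1/2$, consistent with Viderman's decoder and with Definition~\ref{def:expander}; alternatively one phrases it as $\delta \le \delta_0$ for the relevant threshold.

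For correctness, I would argue by an invariant maintained across the \texttt{while} loop: after iteration $i$, the vector $\hat{E}_i$ is supported inside $\varepsilon$ (each added $e_v$ has $v\in V_\kappa\subseteq\varepsilon$), and $\sigma_i = \sigma_\kappa \oplus H(\hat{E}_i)^T$ restricted to $C_\kappa$ equals the syndrome of the \emph{residual} error on the not-yet-peeled bits. When a unique check $u$ with $\mathrm{supp}(\Gamma(u))\cap V_\kappa=\{v\}$ is processed, the residual syndrome at $u$ is determined entirely by whether $v$ is in the true error, so setting $\hat{E}$ at $v$ to $\sigma_{\kappa,u}$ is forced; updating $\sigma_{i+1}=\sigma_i\oplus\Gamma(v)$ and deleting $v$ preserves the invariant. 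At termination with $V_\kappa=\emptyset$, the residual error is empty, so $\sigma_i$ restricted to $C_\kappa$ is zero and $\hat{E}_i$ agrees with the true error on every erased coordinate; since both are supported in $\varepsilon$, $\hat E_i$ is the (unique, when it exists) valid correction. For the complexity claim, I would observe that each variable node is peeled at most once, each peeling touches $O(d_V)$ checks, and each such check is re-examined $O(d_C)$ times over the whole run, so with $d_V,d_C$ constant the total work is $O(|V|+|C|)=O(N)$, using an appropriate queue/stack to maintain $\mathrm{Unique}$ in $O(1)$ amortised time per update.

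The main obstacle is making the progress step airtight: one must be careful that the $\mathrm{Unique}$ set as maintained by the algorithm's incremental \texttt{for}-loop update genuinely coincides, at the start of every iteration, with the combinatorial set $\{u\in C_\kappa : |\Gamma(u)\cap V_\kappa|=1\}$ whose nonemptiness the expansion argument delivers — in particular, that removing $v$ can both \emph{create} new unique checks (degree $2\to1$) and \emph{destroy} old ones (degree $1\to0$), and the code handles both cases. Granting that bookkeeping lemma, the expansion bound closes the argument; everything else is the routine invariant-and-counting argument sketched above.
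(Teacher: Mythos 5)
Your proposal is correct and is essentially the argument the paper relies on: the paper supplies no proof of its own, deferring entirely to Section~4.2 of Viderman's paper, and your unique-neighbour expansion argument (the checks meeting the unpeeled set $S$ in exactly one vertex number at least $(1-2\delta)d_V|S|>0$, so peeling never stalls) together with the forced-assignment invariant is precisely that argument. Your remark that the hypothesis $\delta<1/2$ is needed, although omitted from the theorem statement, is a valid and worthwhile correction.
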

\begin{proof}
    We refer the readers to Section 4.2 in Ref.~\cite{viderman2013linear} for the proof.

\end{proof}
In Algorithm \ref{algo:main_algo}, we use Algorithm \ref{algo:peeling} to correct an isolated or frozen dangling cluster $\kappa$. If $|V_\kappa| \leq \Gamma_V |V|$ (resp. $|V_\kappa| \leq \Gamma_C |C|$) for a horizontal (resp. vertical) cluster, then the erasures in this cluster are corrected completely according to Theorem \ref{theo:peeling}.

\subsection{Analysis of Algorithm \ref{algo:classify}}
\label{subsec:classify_analysis}



After applying the classical peeling decoder to the Tanner graph $\mathcal{T}(H_Z)$, none of the internal checks for any cluster would be dangling. It follows that if there is a dangling check in a cluster, then it has to be the connecting check. Thus, peeling for any dangling cluster either for classification or for erasure-correction must begin with the connecting check. 

For classification, we fix the (virtual) syndrome at the internal checks as $0$ and at the connecting check as $1$ and use $\mathrm{PEEL}$ to assign values to variable nodes consistent with this syndrome. If the procedure does not terminate, some of the variables nodes are left unassigned. In this case, the $\mathrm{PEEL}$ procedure has failed to classify.

Since the procedure $\mathrm{PEEL}$ runs in linear-time, classification is also done in linear-time.


\subsection{Analysis of Algorithm \ref{algo:SSF}}
\label{subsec:SSF_analysis}

The notation followed in this analysis is: Let $\mathcal{T}(V \cup C, E)$ be a $(d_V,d_C)$-biregular $(\gamma_V,\delta_V,\gamma_C,\delta_C)$-left-right-expanding graph. Let $\mathcal{C}$ be the classical expander code associated with $\mathcal{T}$. The quantum expander code $\mathcal{C}_{\mathcal{T}} = \mathrm{HGP}(\mathcal{C}, \mathcal{C})$ is obtained by taking the hypergraph product of $\mathcal{C}$ with itself.

Define $r$ and $\beta$ by:
\begin{equation*}
    r := \frac{d_V}{d_C} = \frac{|C|}{|V|}, \quad \beta := \frac{r}{2}\left[1 - 4\left(\delta_V + \delta_C + (\delta_C - \delta_V)^2\right)\right].
\end{equation*}
The decoding strategy of SSF, given in Algorithm \ref{algo:SSF}, is to decrease the syndrome by flipping a set of qubits called a small set. Let the set of all small sets be given by 
\[
\mathcal{F} := \{F \subseteq \Gamma_X(g) \cap \varepsilon : g \in C_X\}.
\] Here, $\Gamma_X(g)$ represents the support of the $X$-type stabilizer generator $g$, and $\varepsilon$ denotes the set of erasures. However, in Algorithm \ref{algo:SSF}, the small set, the set of erasures, and the support of stabilizer generators are treated as vectors supported appropriately.

For correcting erasures, the idea is to go through the set of all small sets $\mathcal{F}$ and check whether there exists a small set $F \in \mathcal{F}$ that would decrease the syndrome. If such an $F$ exists, we flip $F$ and then proceed to find the next small set. In Algorithm \ref{algo:SSF} a small set $F$ is flipped if and only if it decreases the syndrome bt atleast $\beta d_C |F|$. The analysis of the SSF decoder for erasures follows a similar analysis as done for errors in \cite{fawzi2018efficient}. Given the syndrome $\sigma$, Algorithm \ref{algo:SSF} provides an estimate of the error $\hat{E}$, supported on the erasure locations, which satisfies the syndrome.

Following the references \cite{leverrier2015quantum} and \cite{fawzi2018efficient}, we define the notion of critical generators. An $X$-type generator is termed a critical generator if its support contains a set of qubits that can be flipped to reduce the syndrome.

\begin{definition}
    A generator $g \in C_X$ is said to be a critical generator for $E \subseteq \varepsilon$ if:\begin{equation*}
    \Gamma_X(g)=\Gamma_1 \uplus \overline{\Gamma}_1 \uplus \Gamma_2 \uplus \overline{\Gamma}_2
    \end{equation*} where \begin{itemize}
        \item $\Gamma_X(g) \cap V^2= \Gamma_1 \uplus \overline{\Gamma}_1 $ and $\Gamma_X(g) \cap C^2= \Gamma_2 \uplus \overline{\Gamma}_2 $;
        \item $\Gamma_1, \Gamma_2 \subseteq E$ and  $\overline{\Gamma}_1,\overline{\Gamma}_2 \subseteq \varepsilon \setminus E$
        \item for all $v_1 \in \Gamma_1$, $v_2 \in \Gamma_2$, $\overline{v}_1 \in \overline{\Gamma}_1$ and $\overline{v}_2 \in \overline{\Gamma}_2$: \begin{itemize}
            \item $E \cap [\Gamma_Z[\Gamma_Z(v_1)\cap \Gamma_Z(v_2)]=\{v_1,v_2\}$ 
            \item $E \cap [\Gamma_Z[\Gamma_Z(\overline{v}_1)\cap \Gamma_Z(\overline{v}_2)]=\phi$
            \item $E \cap [\Gamma_Z[\Gamma_Z(v_1)\cap \Gamma_Z(\overline{v}_2)]=\{v_1\}$
            \item $E \cap [\Gamma_Z[\Gamma_Z(\overline{v}_1)\cap \Gamma_Z(v_2)]=\{v_2\}$
        \end{itemize}
        \item $\Gamma_1 \cup \Gamma_2 \neq \phi$.
    \end{itemize}
\end{definition}

\begin{lemma}(\cite[Lemma B.2]{fawzi2018efficient}) \label{lemma:critical}
    For an error $E \subseteq V_Q$ such that $0 < |E| \leq \mathrm{min}(\gamma_V |V|, \gamma_C |C|) $, there exists a critical generator for $E$.
\end{lemma}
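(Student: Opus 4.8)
The plan is to reduce the erasure statement to the error-correction statement that is already proved as Lemma B.2 in \cite{fawzi2018efficient}, by observing that the notion of a critical generator for $E \subseteq \varepsilon$ used here is a straightforward relabelling of the notion used in the error-correction setting. Concretely, in the error analysis one considers an error $E$ of bounded weight and shows there is a generator $g$ whose support decomposes into four parts according to how $E$ interacts with the length-two ``paths'' through $Z$-checks; the only difference in the erasure setting is that the ambient set in which we are allowed to flip qubits is the erasure support $\varepsilon$ rather than all of $V_Q$, so the parts $\overline{\Gamma}_1, \overline{\Gamma}_2$ are required to lie in $\varepsilon \setminus E$ rather than merely outside $E$. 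Since $E \subseteq \varepsilon$ by hypothesis, and since the combinatorial counting argument of \cite{fawzi2018efficient} only ever uses the four intersection conditions and the weight bound $|E| \leq \min(\gamma_V|V|, \gamma_C|C|)$, the argument transfers verbatim.

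The key steps I would carry out are as follows. First, I would recall the hypergraph product structure: an $X$-generator $g \in C_X = C \times V$ has support $\Gamma_X(g)$ that meets the $V^2$-qubits in a set indexed by $\Gamma(c) \times \{v\}$ and the $C^2$-qubits in a set indexed by $\{c\} \times \Gamma(v)$, so $|\Gamma_X(g) \cap V^2| = d_C$ and $|\Gamma_X(g) \cap C^2| = d_V$. Second, I would fix a nonempty $E$ of weight at most $\min(\gamma_V|V|, \gamma_C|C|)$ and, following \cite{fawzi2018efficient}, use the left/right expansion of $\mathcal{T}$ to guarantee the existence of a qubit $q_0 \in E$ together with a generator $g$ such that the relevant neighbourhoods through the $Z$-checks are ``clean'' — i.e., the only elements of $E$ reachable by the short paths are the intended ones. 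Third, I would set $\Gamma_1 = (\Gamma_X(g)\cap V^2)\cap E$, $\Gamma_2 = (\Gamma_X(g)\cap C^2)\cap E$, and $\overline{\Gamma}_1, \overline{\Gamma}_2$ the corresponding complements intersected with $\varepsilon$, and verify each of the four displayed intersection identities using the cleanliness of $g$. Finally I would check $\Gamma_1 \cup \Gamma_2 \neq \phi$, which holds because $q_0 \in E$ lies in $\Gamma_X(g)$.

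The main obstacle — really the only nontrivial point — is the extraction of the clean generator $g$ from the expansion properties: one must show that among all generators touching $E$, expansion forces at least one whose short-path neighbourhoods do not accidentally pick up extra elements of $E$, and that in doing so the weight bound $|E| \leq \min(\gamma_V|V|,\gamma_C|C|)$ is exactly what is needed to invoke Definition \ref{def:expander}. I would handle this by citing the counting argument of \cite[Lemma B.2]{fawzi2018efficient} directly rather than reproducing it, since the only modification is bookkeeping: replacing ``outside $E$'' by ``in $\varepsilon\setminus E$'', which is automatic once we restrict attention to qubits in $\varepsilon$ throughout. A secondary, purely cosmetic obstacle is matching notation — the references use slightly different symbols for the four parts and for the path sets $\Gamma_Z[\Gamma_Z(\cdot)\cap\Gamma_Z(\cdot)]$ — so I would include one sentence establishing the dictionary before invoking their result.
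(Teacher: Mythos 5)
Your proposal matches the paper's treatment exactly: the paper offers no independent proof of this lemma and simply imports it as Lemma~B.2 of \cite{fawzi2018efficient}, which is precisely the reduction you carry out. The extra bookkeeping you describe for the erasure setting (replacing ``outside $E$'' by ``in $\varepsilon \setminus E$'') is reasonable but not required for the statement as written, since the lemma is quoted for an arbitrary error $E \subseteq V_Q$ in its original error-correction form.
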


\begin{lemma} (\cite[Lemma B.3]{fawzi2018efficient}) \label{lemma:syn_decrease}
    Let $E \subseteq V_Q$ be a error such that $0 < |E| \leq r~ \mathrm{min}(\gamma_V |V|, \gamma_C |C|) $ then there exists an error $F \subseteq \mathcal{F}$ with $|\sigma(E)|-|\sigma(E \oplus F )| \geq \beta d_C |F|$.
\end{lemma}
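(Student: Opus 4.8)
The plan is to derive Lemma~\ref{lemma:syn_decrease} from Lemma~\ref{lemma:critical} by reading off the required small set from a critical generator. First observe that $r=d_V/d_C\le 1$, so the hypothesis $0<|E|\le r\min(\gamma_V|V|,\gamma_C|C|)$ implies the hypothesis of Lemma~\ref{lemma:critical}; hence there exists a critical generator $g\in C_X$ for $E$, together with a decomposition $\Gamma_X(g)=\Gamma_1\uplus\overline{\Gamma}_1\uplus\Gamma_2\uplus\overline{\Gamma}_2$ as in the definition, with $\Gamma_1,\Gamma_2\subseteq E$, $\overline{\Gamma}_1,\overline{\Gamma}_2\subseteq\varepsilon\setminus E$, and $\Gamma_1\cup\Gamma_2\ne\emptyset$. (If $\sigma(E)=\mathbf{0}$ the claim is trivial with $F=\emptyset$, so assume $\sigma(E)\ne\mathbf{0}$.) I would take as candidate small set $F:=\Gamma_1\uplus\Gamma_2$, the error part of the generator's support. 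Since $F\subseteq\Gamma_X(g)\cap E\subseteq\Gamma_X(g)\cap\varepsilon$, $F$ is a small set in the sense of $\mathcal F$, and $F\ne\emptyset$; what remains is the estimate $|\sigma(E)|-|\sigma(E\oplus F)|\ge\beta d_C|F|$.

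Next I would compute the syndrome change by a local analysis in the neighbourhood of $g$, mirroring the error-case argument of \cite[Appendix~B]{fawzi2018efficient}. Writing $g=(c,v)$ and using the hypergraph-product incidence pattern recalled in Section~\ref{subsec:quantum_expander}, the only $Z$-checks incident to $\Gamma_X(g)$ are those in the block $\Gamma(c)\times\Gamma(v)$, and each such check is incident to exactly two qubits of $\Gamma_X(g)$, one in $\Gamma_X(g)\cap V^2$ and one in $\Gamma_X(g)\cap C^2$; since $F\subseteq\Gamma_X(g)$, no check outside this block changes value when $F$ is flipped. This puts these checks in correspondence with ordered pairs consisting of one qubit of $\Gamma_X(g)\cap V^2$ and one of $\Gamma_X(g)\cap C^2$, sorted into the four types $(\Gamma_1,\Gamma_2),(\Gamma_1,\overline{\Gamma}_2),(\overline{\Gamma}_1,\Gamma_2),(\overline{\Gamma}_1,\overline{\Gamma}_2)$. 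The four bulleted conditions in the definition of a critical generator prescribe, for each type, the parity $|E\cap\Gamma_Z(\cdot)|\bmod 2$ at the corresponding check, hence whether it is initially satisfied or violated, while flipping $F$ toggles precisely the checks of the two ``mixed'' types. Tabulating before and after then yields a closed-form expression for $|\sigma(E)|-|\sigma(E\oplus F)|$ in terms of $|\Gamma_1|,|\overline{\Gamma}_1|,|\Gamma_2|,|\overline{\Gamma}_2|$, subject to $|\Gamma_1|+|\overline{\Gamma}_1|=d_C$ and $|\Gamma_2|+|\overline{\Gamma}_2|=d_V$.

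Finally I would convert this expression into the stated bound; this is where the exact value $\beta=\tfrac{r}{2}\bigl[1-4\bigl(\delta_V+\delta_C+(\delta_C-\delta_V)^2\bigr)\bigr]$ enters. The point is to control the sizes of the ``collision'' sets $\overline{\Gamma}_1,\overline{\Gamma}_2$ — equivalently, to guarantee that a large enough fraction of $\Gamma_X(g)$ genuinely lies in $E$ while a large enough fraction does not — using the $(\gamma_V,\delta_V)$-left- and $(\gamma_C,\delta_C)$-right-expansion of $\mathcal T$; these bounds are exactly what forces the coefficient to be at least $\beta d_C$, and the cross term $(\delta_C-\delta_V)^2$ is the signature of combining the two one-sided expansion estimates. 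The passage from errors to erasures changes nothing essential: the only modification relative to \cite{fawzi2018efficient} is that every small set is additionally required to lie inside $\varepsilon$, which merely restricts $\mathcal F$ and is compatible with the choice $F\subseteq E\subseteq\varepsilon$ made above. I expect this last step to be the main obstacle — extracting sufficiently sharp upper bounds on $|\overline{\Gamma}_1|$ and $|\overline{\Gamma}_2|$ from double expansion so that the constant works out to exactly $\beta d_C$; the local neighbourhood analysis in the middle step is mechanical once the hypergraph-product structure around $g$ is written down.
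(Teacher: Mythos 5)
The paper never proves this lemma: it is imported verbatim as \cite[Lemma B.3]{fawzi2018efficient}, so there is no in-paper argument to compare yours against, and your proposal has to be judged as a free-standing reconstruction of the cited proof. Its first two steps are sound. Since $d_V\le d_C$ gives $r\le 1$, the hypothesis of Lemma~\ref{lemma:syn_decrease} does imply that of Lemma~\ref{lemma:critical}, and your local analysis around the critical generator $g$ is correct: the only $Z$-checks affected by flipping $F=\Gamma_1\uplus\Gamma_2$ are the $d_Vd_C$ checks of the grid around $g$, each adjacent to exactly one qubit of $\Gamma_X(g)\cap V^2$ and one of $\Gamma_X(g)\cap C^2$, and the four bulleted conditions pin down every initial syndrome value, yielding the exact identity $|\sigma(E)|-|\sigma(E\oplus F)|=|\Gamma_1|\,|\overline{\Gamma}_2|+|\overline{\Gamma}_1|\,|\Gamma_2|$ with $|\Gamma_1|+|\overline{\Gamma}_1|=d_C$ and $|\Gamma_2|+|\overline{\Gamma}_2|=d_V$.

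The gap is in the final step, and it is not just a matter of grinding out the constant $\beta$. With your choice of $F$ the relevant ratio is $\bigl(|\Gamma_1||\overline{\Gamma}_2|+|\overline{\Gamma}_1||\Gamma_2|\bigr)/\bigl(|\Gamma_1|+|\Gamma_2|\bigr)$, and nothing in the stated definition of a critical generator excludes $\overline{\Gamma}_1=\overline{\Gamma}_2=\emptyset$, i.e.\ $\Gamma_X(g)\subseteq E$; in that case the syndrome decrease is exactly $0$ while $|F|=d_V+d_C>0$, so the claimed inequality fails outright. Moreover, the quantity you propose to control with expansion points the wrong way: you need an \emph{upper} bound on $|F|$ relative to the complementary part of the generator, not merely lower bounds on $|\overline{\Gamma}_1|,|\overline{\Gamma}_2|$. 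The standard repair is to note that $\Gamma_1\uplus\Gamma_2$ and $\overline{\Gamma}_1\uplus\overline{\Gamma}_2$ differ by the full support of $g$ and therefore induce the same syndrome change, so one may take $F$ to be the smaller of the two (giving $|F|\le(d_V+d_C)/2$), and/or to first pass to a minimum-weight representative of $E$ modulo the $X$-stabilizers. That second device is precisely where the erasure setting is \emph{not} ``nothing essential'': the reduced representative need not be supported inside $\varepsilon$, so its critical generator can hand you an $F\notin\mathcal{F}$. This is why the definition above forces $\overline{\Gamma}_1,\overline{\Gamma}_2\subseteq\varepsilon\setminus E$ (hence $\Gamma_X(g)\subseteq\varepsilon$), and why the existence statement of Lemma~\ref{lemma:critical} must itself be re-established under that extra constraint rather than quoted as is. Your closing sentences wave both of these points away.
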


\begin{theorem} \label{theo:ssf}
    Let $\mathcal{T}(V \cup C, E)$ be a $(d_V, d_C)$-biregular $(\gamma_V, \delta_V, \gamma_C, \delta_C)$-left-right-expanding graph such that $\delta_V, \delta_C < 1/8$. For a quantum expander code $\mathcal{C}_{\mathcal{T}}$, the small-set-flip decoder runs in linear time in the code length $n = |V|^2 + |C|^2$ and decodes any erasure pattern $\varepsilon$ where
    \begin{equation*}
        |\varepsilon| \leq r \, \mathrm{min}(\gamma_V |V|, \gamma_C |C|).
    \end{equation*}
\end{theorem}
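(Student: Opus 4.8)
\emph{Overall strategy.} The plan is to run the Small-Set-Flip loop of Algorithm~\ref{algo:SSF} on the syndrome $\sigma=\sigma(E)$ of the (unknown) $X$-type error $E$ with $\mathrm{supp}(E)\subseteq\varepsilon$, and to argue that it halts with residual syndrome $\mathbf{0}$ after only a bounded number of iterations, the returned estimate agreeing with $E$ up to a stabilizer. Lemma~\ref{lemma:critical} and Lemma~\ref{lemma:syn_decrease} do the real work; what remains is to certify that their hypotheses keep holding along the run and to bound the running time. First I would set up the bookkeeping: let $\hat{E}_i,\sigma_i$ denote the estimate and syndrome after $i$ iterations and put $E_i:=E\oplus\hat{E}_i$. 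From the updates $\hat{E}_{i+1}=\hat{E}_i\oplus F_i$ and $\sigma_{i+1}=\sigma_i\oplus\sigma(F_i)$ one checks by induction that $\sigma_i=\sigma(E)\oplus\sigma(\hat{E}_i)=\sigma(E_i)$, and that each flipped $F_i\in\mathcal{F}$ has $\mathrm{supp}(F_i)\subseteq\varepsilon$, so $\mathrm{supp}(E_i)\subseteq\varepsilon$ for every $i$. Hence $|E_i|\le|\varepsilon|\le r\,\mathrm{min}(\gamma_V|V|,\gamma_C|C|)$ throughout, which is precisely the regime (together with $\mathrm{supp}(E_i)\subseteq\varepsilon$, as required for critical generators to be defined) in which Lemma~\ref{lemma:syn_decrease} applies. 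This is the one clean feature of the erasure setting: in the error analysis of~\cite{fawzi2018efficient} one must instead contain the weight growth $|E_i|\le|E|+|\hat{E}_i|$ by a separate argument.

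\emph{Termination with zero syndrome.} Next I would record that $\delta_V,\delta_C<1/8$ makes $\beta$ a positive constant: the map $(\delta_V,\delta_C)\mapsto\delta_V+\delta_C+(\delta_C-\delta_V)^2$ has both partial derivatives positive on $[0,1/8]^2$, so it is maximised at $(1/8,1/8)$ with value $1/4$, whence $\beta=\tfrac{r}{2}\bigl(1-4(\delta_V+\delta_C+(\delta_C-\delta_V)^2)\bigr)>0$. Now whenever $\sigma_i\neq\mathbf{0}$ we have $E_i\neq\emptyset$, so by the invariant and Lemma~\ref{lemma:syn_decrease} there is $F\in\mathcal{F}$ with $|\sigma_i|-|\sigma_i\oplus\sigma(F)|\ge\beta d_C|F|$, which is exactly the loop condition of Algorithm~\ref{algo:SSF}; thus the loop cannot stop while $\sigma_i\neq\mathbf{0}$. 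The selected $F_i$ maximises $(|\sigma_i|-|\sigma_i\oplus\sigma(F)|)/|F|$ and so also meets the threshold, giving $|\sigma_{i+1}|\le|\sigma_i|-\beta d_C|F_i|<|\sigma_i|$, i.e.\ $|\sigma_{i+1}|\le|\sigma_i|-1$ since both are integers. Therefore the loop halts after at most $|\sigma_0|$ iterations, and it can only halt once $\sigma_i=\mathbf{0}$; so $\sigma_{\mathrm{SSF}}=\mathbf{0}$, the returned $\hat{E}$ satisfies $\sigma(\hat{E})=\sigma$ with $\mathrm{supp}(\hat{E})\subseteq\varepsilon$, and Algorithm~\ref{algo:main_algo} does not report Failure.

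\emph{Correctness and linear running time.} At termination $E\oplus\hat{E}$ has trivial syndrome and weight at most $|\varepsilon|\le r\,\mathrm{min}(\gamma_V|V|,\gamma_C|C|)$; since $r<1$ this is strictly below $\mathrm{min}\{d_{\mathrm{min}},d_{\mathrm{min}}^T\}$, hence below the minimum distance of $\mathcal{C}_{\mathcal{T}}$, so $E\oplus\hat{E}$ lies in the stabilizer group and $\hat{E}$ is a correct decoding. For the running time, $|\sigma_0|=|\sigma(E)|\le 2\max(d_V,d_C)\,|E|=O(|\varepsilon|)=O(|V|)=O(\sqrt{n})$, so there are $O(\sqrt{n})$ iterations. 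A single initial scan over the $R_X\le n$ generators, each carrying at most $2^{d_V+d_C}=O(1)$ candidate small sets, lists all currently helpful small sets and their ratios in time $O(n)$; after a flip only the $O(1)$ checks in $\mathrm{supp}(\sigma(F_i))$ change, touching only $O(1)$ generators, so this list and its maximiser can be refreshed in $O(1)$ time per iteration — the incremental implementation of~\cite{grospellier2019constant}. The total cost is thus $O(n)$.

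\emph{Where the difficulty sits.} Granting Lemmas~\ref{lemma:critical} and~\ref{lemma:syn_decrease}, the conceptual core is the invariant $\mathrm{supp}(E_i)\subseteq\varepsilon$, which is exactly what allows Lemma~\ref{lemma:syn_decrease} to be reused unchanged at every step, and this is the step I would state most carefully. The one genuinely delicate point is obtaining an $O(n)$ — rather than $O(n^{3/2})$ — bound, which forces one to invoke the incremental data-structure update of~\cite{grospellier2019constant} instead of rescanning all generators each round; by contrast, verifying $\beta>0$ under $\delta_V,\delta_C<1/8$ and the sub-distance estimate $r\,\mathrm{min}(\gamma_V|V|,\gamma_C|C|)<d_{\mathrm{min}}$ are routine.
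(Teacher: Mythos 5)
Your proof follows essentially the same route as the paper's: both hinge on the observation that every intermediate error $E_i = E \oplus \hat{E}_i$ stays supported in $\varepsilon$ (since $E_0\subseteq\varepsilon$ and each flipped $F_i\subseteq\varepsilon$), so $|E_i|\le r\,\mathrm{min}(\gamma_V|V|,\gamma_C|C|)$ throughout and Lemma~\ref{lemma:syn_decrease} applies at the final state, contradicting the loop's stopping condition unless the residual error is trivial. You additionally make explicit several points the paper's proof leaves implicit — that $\delta_V,\delta_C<1/8$ forces $\beta>0$ and hence termination via strict syndrome decrease, the up-to-stabilizer correctness via the distance bound, and the incremental data structure needed for the $O(n)$ runtime claim — which is a welcome tightening but not a different argument.
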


\begin{proof}

We run Algorithm \ref{algo:SSF} on the input $E_0$, and let $\hE_f$ denote the output. Define 
\[
E_f = E_0 \oplus \hE_f = E_0 \oplus \left(\bigoplus_{k=0}^{f-1} F_k\right).
\]
We also define $U := E_0 \cup F_0 \cup \cdots \cup F_{f-1}$,
referred to as the \textit{execution support}, where $f \in \mathbb{N}$ represents the number of iterations. This set includes all qubits that were in error at some point during the algorithm's execution.

By definition, $F_i \subseteq \varepsilon \ \forall~ i$ and $E_0 \subseteq \varepsilon$, which implies $U \subseteq \varepsilon$. Consequently, $|U| \leq |\varepsilon|$, leading to 
\[
|U| \leq r \, \mathrm{min}(\gamma_V |V|, \gamma_C |C|).
\]
Thus, 
\[
|E_f| = |E_0 \oplus \hE_f| \leq |U| \leq r \, \mathrm{min}(\gamma_V |V|, \gamma_C |C|).
\]

To prove Theorem \ref{theo:ssf}, it is sufficient to show that $E_f = 0$. Assume, for the sake of contradiction, that $E_f \neq 0$. Then, by the hypothesis of Lemma \ref{lemma:syn_decrease}, there exists $F \in \mathcal{F}$ such that $|\sigma(E_f)| - |\sigma(E_f \oplus F)| \geq \beta d_C |F|$.
However, since $E_f$ is the output, the while condition in Algorithm \ref{algo:SSF} is not satisfied for $\sigma_f = \sigma(E_0 \oplus \hE_f)$, leading to a contradiction.
    
\end{proof}

In Algorithm \ref{algo:main_algo}, we use Algorithm \ref{algo:SSF} to correct the errors which were not corrected in the previous iterations of peeling. From Theorem \ref{theo:ssf}, if the number of such errors is up to $\frac{d_v}{d_c}\mathrm{min}(\gamma_V |V|, \gamma_C |C|)$, which is a fraction of the minimum distance of the quantum expander code, then they will be corrected completely by SSF.
\section{Simulation Results} \label{sec:simulation}
In Fig. \ref{fig:failure_rate}, we present simulation results for the performance of the $[[1525,25]]$, $[[3904,64]]$, $[[6100,100]]$, and $[[8784,144]]$ quantum expander codes obtained by hypergraph product of $(5,6)$-biregular classical expander codes under the peeling decoder.  The plots are generated by performing $10^5$ trials for each erasure rate, and with help from \cite{connolly:2022}.
\begin{figure}[ht]
    \centering
    \scalebox{0.85}{\includegraphics[width=0.8\linewidth]{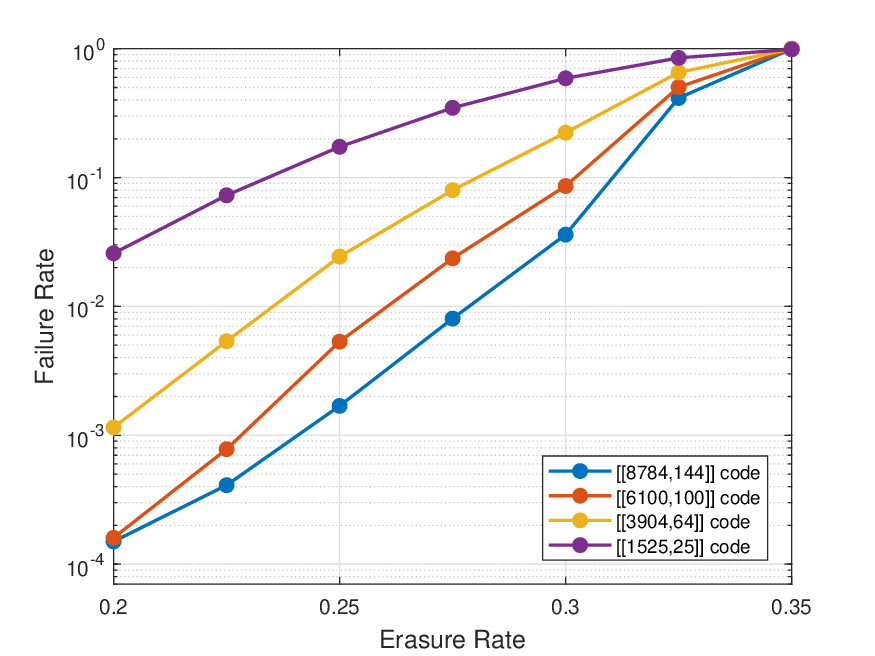}}
    \caption{Performance of the $[[1525,25]]$, $[[3904,64]]$, $[[6100,100]]$, and $[[8784,144]]$ quantum expander codes under the peeling decoder.}
    \label{fig:failure_rate}
    \vspace{-0.25em}
\end{figure}

As Fig. \ref{fig:failure_rate} shows, for the same erasure rate, the failure rate decreases significantly with the length of the quantum expander code, for low and moderate erasure rates.  This is interesting, and in contrast with the simulation results in \cite{connolly2024fast}, where increasing the length of the considered sub-class of HGP codes (progressive-edge-growth-based HGP codes) does not improve the performance of the peeling decoder. For example, for an erasure rate of $0.25$, the failure rate of the HGP codes considered in \cite{connolly2024fast} remains almost constant at around $10^{-1}$. 
Further, for the same erasure rate, the $[[1600,64]]$ HGP code decoded using the VH decoder in \cite{connolly2024fast} can tolerate a failure rate of approximately $7$ in $10^{-3}$, but at the cost of increased complexity. However, as Fig. \ref{fig:failure_rate} shows, a longer quantum expander code, for example $[[6100,100]]$ code, can instead be employed to achieve a comparable performance under peeling, which has linear complexity.

In Table \ref{tab:residual_error}, we tabulate the maximum weight of the residual error after peeling for the $[[1525,25]]$, $[[3904,64]]$, $[[6100,100]]$, and $[[8784,144]]$ quantum expander codes.
In Table \ref{tab:mean_and_var}, we tabulate the mean and variance of the residual error weight after peeling. For example, for the $[[6100,100]]$ code, at erasure rate $0.275$, the maximum weight of a residual error was observed to be 41, while the empirical mean and variance of the weight of residual error patterns were observed to be approximately $0.26$ and $3.12$, respectively. Similarly, for the $[[8784,144]]$ code, at erasure rate $0.3$, the maximum weight of a residual error was observed to be $38$, while the empirical mean and variance of the weight of residual error patterns were observed to be approximately $0.44$ and $6.09$, respectively. This implies that in many cases one can potentially employ the modified SSF algorithm (with known error locations) after peeling to correct the remaining erasures. Since the modified SSF algorithm has linear complexity, it follows that the combination of peeling followed by the modified SSF algorithm also has linear complexity.

In Table \ref{tab:isolated_horizontal_clusters} and \ref{tab:isolated_vertical_clusters}, we tabulate some statistics on the number and sizes of isolated horizontal clusters and isolated vertical clusters after peeling, respectively. We observe empirically that, for a fixed erasure rate, the maximum number of isolated horizontal (and vertical) clusters does not increase as a function of the length of the code. Therefore, these clusters can potentially be corrected using Gaussian Elimination. 


We also observed that the peeling procedure inside clusters was rarely successful because after the initial peeling, the connecting checks in the clusters had degree $>1$ inside the cluster. However, even if the degree of the connecting check is $1$, it does not guarantee successful termination of peeling.


\begin{table}[h!]
\centering
\begin{tabular}{|c|c|c|c|c|c|c|}
\hline
\textbf{Code $\backslash$ Erasure Rate}  & $\mathbf{0.2}$ & $\mathbf{0.225}$ & $\mathbf{0.25}$ & $\mathbf{0.275}$ & $\mathbf{0.3}$ & $\mathbf{0.325}$  \\
\hline
$[[1525,25]]$ & $20$ & $19$ & $30$ & $42$ & $250$ & $289$ \\
\hline
$[[3904,64]]$ & $15$ & $16$ & $33$ & $38$ & $534$ & $663$ \\
\hline
$[[6100,100]]$ & $15$ & $20$ & $21$ & $41$ & $738$ & $994$   \\
\hline
$[[8784,144]]$ & $13$ & $21$ & $21$ & $24$ & $38$ & $1378$ \\
\hline
 \end{tabular}
\caption{Maximum weights of residual error after peeling for the $[[1525,25]]$, $[[3904,64]]$, $[[6100,100]]$, and $[[8784,144]]$ quantum expander codes.}
\label{tab:residual_error}
\end{table}

\begin{table}[h!]
    \centering
    \resizebox{\textwidth}{!}{
    \begin{tabular}{|c|c|c|c|c|c|c|c|c|}
        \hline
        \textbf{Erasure Rate $\backslash$ Code} & \multicolumn{2}{c|}{\textbf{[[1525,25]]}} & \multicolumn{2}{c|}{\textbf{[[3904,64]]}} & \multicolumn{2}{c|}{\textbf{[[6100,100]]}} & \multicolumn{2}{c|}{\textbf{[[8784,144]]}} \\ \hline
        & \textbf{Mean} & \textbf{Variance} & \textbf{Mean} & \textbf{Variance} & \textbf{Mean} & \textbf{Variance} & \textbf{Mean} & \textbf{Variance} \\  \hline
        0.2 & 0.15 & 0.94 & 0.0094 & 0.085 & 0.001 & 0.011 & 0.00093 & 0.007  \\ \hline
        0.225 & 0.43 & 2.79 & 0.047 & 0.45 & 0.0080 & 0.94 & 0.0031 & 0.03  \\ \hline
        0.25 & 1.12 & 7.33 & 0.22 & 2.11 & 0.056 & 0.65 & 0.014 & 0.16  \\ \hline
        0.275 & 2.59 & 17.61 & 0.76 & 7.51 & 0.26 & 3.12 & 0.09 & 1.08  \\ \hline
        0.3 & 7.29 & 382.95 & 2.68 & 169.70 & 0.94 & 26.26 & 0.44 & 6.09 \\ \hline
        0.325 & 66.78 & 8269.12 & 188.38 & 60230.55 & 272.09 & 140976.6 & 371.67 & 281472.82 \\ \hline
    \end{tabular}}
    \caption{Mean and variance of weights of residual error after peeling for the $[[1525,25]]$, $[[3904,64]]$, $[[6100,100]]$, and $[[8784,144]]$ quantum expander codes.}
    \label{tab:mean_and_var}
\end{table}
\vspace{-1em}

\begin{table}[h!]
    \centering
    \resizebox{0.9\textwidth}{!}{
    \begin{tabular}{|c|c|c|c|c|c|}
        \hline
        &
        {\textbf{Erasure Rate $\backslash$ Code}} & $\mathbf{[[1525, 25]]}$ & $\mathbf{[[3904, 64]]}$ & $\mathbf{[[6100, 100]]}$ & $\mathbf{[[8784, 144]]}$ \\ \hline
        \multirow{6}{*}{\textbf{Maximum number of isolated horizontal clusters}} & 
        $0.2$ & $2$ & $1$ & $1$ & $1$  \\ \cline{2-6}
        & $0.225$ & $3$ & $1$ & $1$ & $1$  \\ \cline{2-6}
        & $0.25$ & $4$ & $3$ & $2$ & $1$  \\ \cline{2-6}
        & $0.275$ & $5$ & $3$ & $3$ & $2$  \\ \cline{2-6}
        & $0.3$ & $7$ & $4$ & $4$ & $3$  \\ \cline{2-6}
        & $0.325$ & $7$ & $6$ & $5$ & $3$  \\ \hline
        \multirow{6}{*}{\textbf{Maximum size of isolated horizontal clusters}} & 
        $0.2$ & $18$ & $24$ & $26$ & $26$  \\ \cline{2-6}
        & $0.225$ & $20$ & $26$ & $30$ & $29$  \\ \cline{2-6}
        & $0.25$ & $22$ & $32$ & $31$ & $37$  \\ \cline{2-6}
        & $0.275$ & $22$ & $33$ & $36$ & $39$  \\ \cline{2-6}
        & $0.3$ & $23$ & $32$ & $36$ & $45$  \\ \cline{2-6}
        & $0.325$ & $23$ & $34$ & $36$ & $44$  \\ \hline
        \multirow{6}{*}{\textbf{Minimum size of isolated horizontal clusters}} &
        $0.2$ & $7$ & $10$ & $0$ & $21$  \\ \cline{2-6}
        & $0.225$ & $7$ & $11$ & $16$ & $25$  \\ \cline{2-6}
        & $0.25$ & $7$ & $10$ & $14$ & $19$  \\ \cline{2-6}
        & $0.275$ & $7$ & $10$ & $13$ & $17$  \\ \cline{2-6}
        & $0.3$ & $7$ & $10$ & $13$ & $14$  \\ \cline{2-6}
        & $0.325$ & $7$ & $10$ & $13$ & $14$  \\ \hline
    \end{tabular}}
    \caption{Statistics on isolated horizontal clusters after peeling for the $[[1525,25]]$, $[[3904,64]]$, $[[6100,100]]$, and $[[8784,144]]$  quantum expander codes.}
    \label{tab:isolated_horizontal_clusters}
\end{table}

\begin{table}[h!]
    \centering
    \resizebox{0.9\textwidth}{!}{
    \begin{tabular}{|c|c|c|c|c|c|}
        \hline
        &
        {\textbf{Erasure Rate $\backslash$ Code}} & $\mathbf{[[1525, 25]]}$ & $\mathbf{[[3904, 64]]}$ & $\mathbf{[[6100, 100]]}$ & $\mathbf{[[8784, 144]]}$  \\ \hline
        \multirow{6}{*}{\textbf{Maximum number of isolated vertical clusters}} & 
        $0.2$ & $1$ & $0$ & $0$ & $0$  \\ \cline{2-6}
        & $0.225$ & $2$ & $1$ & $1$ & $0$  \\ \cline{2-6}
        & $0.25$ & $2$ & $1$ & $1$ & $0$  \\ \cline{2-6}
        & $0.275$ & $2$ & $1$ & $1$ & $0$  \\ \cline{2-6}
        & $0.3$ & $2$ & $2$ & $1$ & $1$  \\ \cline{2-6}
        & $0.325$ & $3$ & $2$ & $1$ & $1$ \\ \hline
        \multirow{6}{*}{\textbf{Maximum size of isolated vertical clusters}} & 
        $0.2$ & $16$ & $0$ & $0$ & $0$  \\ \cline{2-6}
        & $0.225$ & $17$ & $22$ & $24$ & $0$ \\ \cline{2-6}
        & $0.25$ & $18$ & $23$ & $26$ & $0$ \\ \cline{2-6}
        & $0.275$ & $18$ & $24$ & $26$ & $0$  \\ \cline{2-6}
        & $0.3$ & $20$ & $27$ & $32$ & $34$  \\ \cline{2-6}
        & $0.325$ & $20$ & $28$ & $31$ & $36$  \\ \hline
        \multirow{6}{*}{\textbf{Minimum size of isolated vertical clusters}} &
        $0.2$ & $8$ & $0$ & $0$ & $0$  \\ \cline{2-6}
        & $0.225$ & $8$ & $19$ & $0$ & $0$  \\ \cline{2-6}
        & $0.25$ & $8$ & $15$ & $0$ & $0$ \\ \cline{2-6}
        & $0.275$ & $8$ & $14$ & $22$ & $0$  \\ \cline{2-6}
        & $0.3$ & $8$ & $13$ & $17$ & $23$  \\ \cline{2-6}
        & $0.325$ & $8$ & $13$ & $18$ & $26$  \\ \hline
    \end{tabular}}
    \caption{Statistics on isolated vertical clusters after peeling for the $[[1525,25]]$, $[[3904,64]]$, $[[6100,100]]$, and $[[8784,144]]$ quantum expander codes.}
    \label{tab:isolated_vertical_clusters}
\end{table}

\section{Conclusion and Future Work} \label{sec:conclusions} 

 In this paper, it was shown through simulation results that compare across quantum LDPC codes, that an attractive option to be considered, is the use of quantum expander codes in conjunction with the linear-complexity peeling decoder.  As in the classical case, stopping sets pose a major challenge in designing efficient erasure decoders for quantum LDPC codes. The linear-time decoding algorithm presented here employed small-set-flip decoding as well as cluster-based decoding to handle stopping sets.  However, it was our finding through simulation, that the cluster-based decoding improves only in small measure, upon the performance of the peeling decoder. Our simulation results also include limited statistical data compiled on erasure patterns that the peeling decoder was unable to recover from. For future work, an interesting direction is to employ other techniques to handle the small number of residual errors after peeling.




\section*{Acknowledgements}

The work of V. Lalitha is supported in part by the grant DST/INT/RUS/RSF/P-41/2021
from the Department of Science \& Technology, Govt. of India. The work of Shobhit Bhatnagar is supported by a Qualcomm Innovation Fellowship India 2021. The authors thank Anirudh Krishna for useful comments and suggestions.

\newpage

\bibliography{IEEEabrv,main.bib}

\begin{thebibliography}{10}

\bibitem{gottesman1997stabilizer}
D.~Gottesman, {\em Stabilizer codes and quantum error correction}.
\newblock California Institute of Technology, 1997.

\bibitem{gottesman2013fault}
D.~Gottesman, ``Fault-tolerant quantum computation with constant overhead,''
  {\em arXiv preprint arXiv:1310.2984}, 2013.

\bibitem{hastings2021fiber}
M.~B. Hastings, J.~Haah, and R.~O'Donnell, ``Fiber bundle codes: breaking the n
  1/2 polylog (n) barrier for quantum ldpc codes,'' in {\em Proceedings of the
  53rd Annual ACM SIGACT Symposium on Theory of Computing}, pp.~1276--1288,
  2021.

\bibitem{breuckmann2021balanced}
N.~P. Breuckmann and J.~N. Eberhardt, ``Balanced product quantum codes,'' {\em
  IEEE Transactions on Information Theory}, vol.~67, no.~10, pp.~6653--6674,
  2021.

\bibitem{panteleev2021quantum}
P.~Panteleev and G.~Kalachev, ``Quantum ldpc codes with almost linear minimum
  distance,'' {\em IEEE Transactions on Information Theory}, vol.~68, no.~1,
  pp.~213--229, 2021.

\bibitem{panteleev2022asymptotically}
P.~Panteleev and G.~Kalachev, ``Asymptotically good quantum and locally
  testable classical ldpc codes,'' in {\em Proceedings of the 54th Annual ACM
  SIGACT Symposium on Theory of Computing}, pp.~375--388, 2022.

\bibitem{leverrier2022quantum}
A.~Leverrier and G.~Z{\'e}mor, ``Quantum tanner codes,'' in {\em 2022 IEEE 63rd
  Annual Symposium on Foundations of Computer Science (FOCS)}, pp.~872--883,
  IEEE, 2022.

\bibitem{leverrier2023decoding}
A.~Leverrier and G.~Z{\'e}mor, ``Decoding quantum tanner codes,'' {\em IEEE
  Transactions on Information Theory}, vol.~69, no.~8, pp.~5100--5115, 2023.

\bibitem{dinur2023good}
I.~Dinur, M.-H. Hsieh, T.-C. Lin, and T.~Vidick, ``Good quantum ldpc codes with
  linear time decoders,'' in {\em Proceedings of the 55th annual ACM symposium
  on theory of computing}, pp.~905--918, 2023.

\bibitem{gu2023efficient}
S.~Gu, C.~A. Pattison, and E.~Tang, ``An efficient decoder for a linear
  distance quantum ldpc code,'' in {\em Proceedings of the 55th Annual ACM
  Symposium on Theory of Computing}, pp.~919--932, 2023.

\bibitem{tillich2013quantum}
J.-P. Tillich and G.~Z{\'e}mor, ``Quantum ldpc codes with positive rate and
  minimum distance proportional to the square root of the blocklength,'' {\em
  IEEE Transactions on Information Theory}, vol.~60, no.~2, pp.~1193--1202,
  2013.

\bibitem{sipser1996expander}
M.~Sipser and D.~A. Spielman, ``Expander codes,'' {\em IEEE transactions on
  Information Theory}, vol.~42, no.~6, pp.~1710--1722, 1996.

\bibitem{leverrier2015quantum}
A.~Leverrier, J.-P. Tillich, and G.~Z{\'e}mor, ``Quantum expander codes,'' in
  {\em 2015 IEEE 56th Annual Symposium on Foundations of Computer Science},
  pp.~810--824, IEEE, 2015.

\bibitem{fawzi2018efficient}
O.~Fawzi, A.~Grospellier, and A.~Leverrier, ``Efficient decoding of random
  errors for quantum expander codes,'' in {\em Proceedings of the 50th Annual
  ACM SIGACT Symposium on Theory of Computing}, pp.~521--534, 2018.

\bibitem{fawzi2020constant}
O.~Fawzi, A.~Grospellier, and A.~Leverrier, ``Constant overhead quantum fault
  tolerance with quantum expander codes,'' {\em Communications of the ACM},
  vol.~64, no.~1, pp.~106--114, 2020.

\bibitem{knill2001scheme}
E.~Knill, R.~Laflamme, and G.~J. Milburn, ``A scheme for efficient quantum
  computation with linear optics,'' {\em nature}, vol.~409, no.~6816,
  pp.~46--52, 2001.

\bibitem{bartolucci2023fusion}
S.~Bartolucci, P.~Birchall, H.~Bombin, H.~Cable, C.~Dawson, M.~Gimeno-Segovia,
  E.~Johnston, K.~Kieling, N.~Nickerson, M.~Pant, {\em et~al.}, ``Fusion-based
  quantum computation,'' {\em Nature Communications}, vol.~14, no.~1, p.~912,
  2023.

\bibitem{wu2022erasure}
Y.~Wu, S.~Kolkowitz, S.~Puri, and J.~D. Thompson, ``Erasure conversion for
  fault-tolerant quantum computing in alkaline earth {R}ydberg atom arrays,''
  {\em Nature communications}, vol.~13, no.~1, p.~4657, 2022.

\bibitem{kang2023quantum}
M.~Kang, W.~C. Campbell, and K.~R. Brown, ``Quantum error correction with
  metastable states of trapped ions using erasure conversion,'' {\em PRX
  Quantum}, vol.~4, no.~2, p.~020358, 2023.

\bibitem{kubica2023erasure}
A.~Kubica, A.~Haim, Y.~Vaknin, H.~Levine, F.~Brand{\~a}o, and A.~Retzker,
  ``Erasure qubits: Overcoming the {T}\_1 limit in superconducting circuits,''
  {\em Physical Review X}, vol.~13, no.~4, p.~041022, 2023.

\bibitem{tsunoda2023error}
T.~Tsunoda, J.~D. Teoh, W.~D. Kalfus, S.~J. de~Graaf, B.~J. Chapman, J.~C.
  Curtis, N.~Thakur, S.~M. Girvin, and R.~J. Schoelkopf, ``Error-detectable
  bosonic entangling gates with a noisy ancilla,'' {\em PRX Quantum}, vol.~4,
  no.~2, p.~020354, 2023.

\bibitem{viderman2013linear}
M.~Viderman, ``Linear-time decoding of regular expander codes,'' {\em ACM
  Transactions on Computation Theory (TOCT)}, vol.~5, no.~3, pp.~1--25, 2013.

\bibitem{krishna2024viderman}
A.~Krishna, I.~Livni~Navon, and M.~Wootters, ``Viderman's algorithm for quantum
  ldpc codes,'' in {\em Proceedings of the 2024 Annual ACM-SIAM Symposium on
  Discrete Algorithms (SODA)}, pp.~2481--2507, SIAM, 2024.

\bibitem{delfosse2020linear}
N.~Delfosse and G.~Z{\'e}mor, ``Linear-time maximum likelihood decoding of
  surface codes over the quantum erasure channel,'' {\em Physical Review
  Research}, vol.~2, no.~3, p.~033042, 2020.

\bibitem{delfosse2021almost}
N.~Delfosse and N.~H. Nickerson, ``Almost-linear time decoding algorithm for
  topological codes,'' {\em Quantum}, vol.~5, p.~595, 2021.

\bibitem{delfosse2022toward}
N.~Delfosse, V.~Londe, and M.~E. Beverland, ``Toward a union-find decoder for
  quantum ldpc codes,'' {\em IEEE Transactions on Information Theory}, vol.~68,
  no.~5, pp.~3187--3199, 2022.

\bibitem{lee2020trimming}
S.~Lee, M.~Mhalla, and V.~Savin, ``Trimming decoding of color codes over the
  quantum erasure channel,'' in {\em 2020 IEEE International Symposium on
  Information Theory (ISIT)}, pp.~1886--1890, IEEE, 2020.

\bibitem{solanki2023decoding}
H.~M. Solanki and P.~K. Sarvepalli, ``Decoding topological subsystem color
  codes over the erasure channel using gauge fixing,'' {\em IEEE Transactions
  on Communications}, vol.~71, no.~7, pp.~4181--4192, 2023.

\bibitem{solanki2021correcting}
H.~M. Solanki and P.~K. Sarvepalli, ``Correcting erasures with topological
  subsystem color codes,'' in {\em 2020 IEEE Information Theory Workshop
  (ITW)}, pp.~1--5, IEEE, 2021.

\bibitem{yao2024cluster}
H.~Yao, M.~G{\"o}kduman, and H.~D. Pfister, ``Cluster decomposition for
  improved erasure decoding of quantum ldpc codes,'' {\em arXiv preprint
  arXiv:2412.08817}, 2024.

\bibitem{kuo2024degenerate}
K.-Y. Kuo and Y.~Ouyang, ``Degenerate quantum erasure decoding,'' {\em arXiv
  preprint arXiv:2411.13509}, 2024.

\bibitem{gokduman2024erasure}
M.~G{\"o}kduman, H.~Yao, and H.~D. Pfister, ``Erasure decoding for quantum ldpc
  codes via belief propagation with guided decimation,'' in {\em 2024 60th
  Annual Allerton Conference on Communication, Control, and Computing},
  pp.~1--8, IEEE, 2024.

\bibitem{freire2025optimizing}
B.~C. Freire, N.~Delfosse, and A.~Leverrier, ``Optimizing hypergraph product
  codes with random walks, simulated annealing and reinforcement learning,''
  {\em arXiv preprint arXiv:2501.09622}, 2025.

\bibitem{richardson2001efficient}
T.~J. Richardson and R.~L. Urbanke, ``Efficient encoding of low-density
  parity-check codes,'' {\em IEEE transactions on information theory}, vol.~47,
  no.~2, pp.~638--656, 2001.

\bibitem{calderbank1996good}
A.~R. Calderbank and P.~W. Shor, ``Good quantum error-correcting codes exist,''
  {\em Physical Review A}, vol.~54, no.~2, p.~1098, 1996.

\bibitem{steane1996error}
A.~M. Steane, ``Error correcting codes in quantum theory,'' {\em Physical
  Review Letters}, vol.~77, no.~5, p.~793, 1996.

\bibitem{connolly2024fast}
N.~Connolly, V.~Londe, A.~Leverrier, and N.~Delfosse, ``Fast erasure decoder
  for hypergraph product codes,'' {\em Quantum}, vol.~8, p.~1450, 2024.

\bibitem{luby2001efficient}
M.~G. Luby, M.~Mitzenmacher, M.~A. Shokrollahi, and D.~A. Spielman, ``Efficient
  erasure correcting codes,'' {\em IEEE Transactions on Information Theory},
  vol.~47, no.~2, pp.~569--584, 2001.

\bibitem{gallager1962low}
R.~Gallager, ``Low-density parity-check codes,'' {\em IRE Transactions on
  information theory}, vol.~8, no.~1, pp.~21--28, 1962.

\bibitem{grospellier2019constant}
A.~Grospellier, {\em Constant time decoding of quantum expander codes and
  application to fault-tolerant quantum computation}.
\newblock PhD thesis, Sorbonne Universit{\'e}, 2019.

\bibitem{connolly:2022}
N.~Connolly, V.~Londe, A.~Leverrier, and N.~Delfosse, ``Fast erasure decoder
  for a class of quantum ldpc codes,''

\end{thebibliography}
\bibliographystyle{ieeetr}

\end{document}